\theoremstyle{plain}
\newtheorem{theorem}{Theorem}[section]
\newtheorem{lemma}[theorem]{Lemma}
\newtheorem{proposition}[theorem]{Proposition}
\theoremstyle{remark}
\newtheorem{remark}[theorem]{Remark}
\numberwithin{equation}{section}
\DeclareMathOperator{\Tr}{Tr}
\DeclareMathOperator{\tr}{Tr}
\def\geqslant{\ge}
\def\leqslant{\le}
\def\bq{\begin{eqnarray}}
\def\eq{\end{eqnarray}}
\def\bqq{\begin{eqnarray*}}
\def\eqq{\end{eqnarray*}}
\def\nn{\nonumber}
\def\eps{\varepsilon}
\def\wto{\rightharpoonup}
\newcommand{\norm}[1]{\left\lVert #1 \right\rVert}
\newcommand\1{{\ensuremath {\mathds 1} }}
\renewcommand{\epsilon}{\varepsilon}
\def\cF {\mathcal{F}}
\def\R {\mathbb{R}}
\def\C {\mathbb{C}}
\def\E {\mathcal{E}}
\def\cE {\mathcal{E}}
\def\R {\mathbb{R}}
\def\C {\mathbb{C}}
\def\gS{\mathfrak{S}}
\def\E {\mathcal{E}}
\def\cM {\mathcal{M}}
\def\gH{\mathfrak{H}}
\newcommand\ii{{\ensuremath {\infty}}}
\newcommand\pscal[1]{{\ensuremath{\left\langle #1 \right\rangle}}}
\renewcommand{\leq}{\leqslant}
\renewcommand{\geq}{\geqslant}
\newcommand{\EH}{\E_{\rm H}}
\newcommand{\eH}{e_{\rm H}}
\newcommand{\ENLS}{\cE_{\rm nls}}
\newcommand{\eNLS}{e_{\rm nls}}
\newcommand{\uNLS}{u_{\rm nls}}
\newcommand{\MNLS}{\cM_{\rm nls}}
\title[Mean-field approximation for trapped Bose gases]{The mean-field approximation and the non-linear Schr\"odinger functional for trapped Bose gases} 
\author[M. Lewin]{Mathieu LEWIN}
\address{CNRS \& Laboratoire de Math\'ematiques (UMR 8088), Universit\'e de Cergy-Pontoise, F-95000 Cergy-Pontoise, France.}
\email{mathieu.lewin@math.cnrs.fr}
\author[P.T. Nam]{Phan Th\`anh NAM}
\address{IST Austria, Am Campus 1, 3400 Klosterneuburg, Austria} 
\email{pnam@ist.ac.at}
\author[N. Rougerie]{Nicolas ROUGERIE}
\address{CNRS \& Universit\'e Grenoble Alpes,  LPMMC (UMR 5493), B.P. 166, F-38 042 Grenoble, France}
\email{nicolas.rougerie@grenoble.cnrs.fr}
\date{2015. Final version to appear in the \emph{Transactions of the American Mathematical Society}.\\ $\copyright$2015 by the authors.}
\begin{document}

\begin{abstract}
We study the ground state of a trapped Bose gas, starting from the full many-body Schr\"odinger Hamiltonian, and derive the nonlinear Schr\"odinger energy functional in the limit of large particle number, when the interaction potential converges slowly to a Dirac delta function.  Our method is based on quantitative estimates on the discrepancy between the full many-body energy and its mean-field approximation using Hartree states. These are proved using finite dimensional localization and a quantitative version of the quantum de Finetti theorem. Our approach covers the case of attractive interactions in the regime of stability. In particular, our main new result is a derivation of the 2D attractive nonlinear Schr\"odinger ground state.
\end{abstract}

\maketitle

\setcounter{tocdepth}{2}
\tableofcontents

\section{Introduction}\label{sec:intro}

The impressive progress of cold atom physics~\cite{PetSmi-01,PitStr-03} during the last two decades gave a new impetus to the theory of many-bosons systems, in particular the study of Bose-Einstein condensates (BECs). In such a state of matter, many interacting particles (bosons) occupy the same quantum state and may thus be collectively modeled using a one-body nonlinear Schr\"odinger description. 

In this framework, the energy functional of a Bose-Einstein condensate of non-relativistic particles in\footnote{We are interested in $d=1,2,3$ but our method also applies to $d\geq 4$.} $\R ^d$, interacting via a pair potential $w$, reads as (in appropriate units)
\begin{equation}\label{eq:intro Hartree}
\EH [u] := \int_{\R ^d } \left| \left( \nabla + i A \right) u \right| ^2 + V  |u | ^2 + \frac{1}{2} |u| ^2 (w \ast |u| ^2 ) .
\end{equation}
This so-called Hartree functional includes a trapping potential $V:\R ^d \to \R$ confining the particles in a bounded region of space ($V(x)\to \infty$ when $|x|\to \infty$), a feature ubiquitous in experiments with cold atoms. The vector potential $A:\R ^d \to \R ^d$ can model a (possibly artificial~\cite{DalGerJuzOhb-11}) magnetic field or the Coriolis force due to the rotation of the atoms~\cite{Aftalion-06,Cooper-08,Fetter-09}, in which case $V$ has to incorporate a contribution due to the centrifugal force. The ground state of the system is obtained by minimizing~\eqref{eq:intro Hartree} under the mass constraint
\begin{equation}\label{eq:intro mass}
\int_{\R ^d } |u| ^2 = 1. 
\end{equation}
Samples in which Bose-Einstein condensation is achieved are typically very dilute, and it is therefore relevant to think of the case where the range of the interaction potential $w$ is much smaller than the size of the system. This situation can be modeled by using contact interactions, which amounts to formally set $w= a \delta_0$ proportional to the delta function at the origin. One then obtains the nonlinear Schr\"odinger (or Gross-Pitaevskii) functional 
\begin{equation}\label{eq:intro nls}
\ENLS [u] := \int_{\R ^d } \left| \left( \nabla + i A \right) u \right| ^2 + V  |u | ^2 + \frac{a}{2} |u| ^4.
\end{equation}
Here $a$ measures the strength of interparticle interactions. An appealing aspect of cold atoms experiments is the possibility to tune the value and even the sign of $a$, going from repulsive ($a>0$) to attractive interactions ($a<0$)\footnote{Respectively from defocusing to focusing interactions in the quantum optics vocabulary.}. In the latter case the system may collapse, i.e. there might not exist a ground state for~\eqref{eq:intro nls}. It is in fact the case if $d=2$ and $a\leq -a^*$ for some critical value $a ^* >0$, and also if $d\geq 3$ and $a<0$. 

An important question is that of the relation between the macroscopic, nonlinear description using functionals such as~\eqref{eq:intro Hartree} and~\eqref{eq:intro nls}, and the underlying, linear, physics based on the many-body Schr\"odinger equation.  
In the latter model the system is described via a many-body Hamiltonian of the form
\begin{equation}\label{eq:start hamil}
H_N = \sum_{j=1} ^N  \left( - \left( \nabla_{x_j} +i A (x_j) \right) ^2 + V(x_j) \right) + \frac{1}{N-1} \sum_{1\leq i<j \leq N}N ^{d\beta} w( N ^{\beta} (x_i-x_j))
\end{equation}
acting on $\gH ^N := \bigotimes_s ^N\gH$, the symmetric tensor product of $N$ copies of the one-body Hilbert space $\gH = L ^2 (\R ^d )$ describing quantum particles living in $d$ dimensions. The symmetry restriction is necessary for bosonic particles that correspond to square integrable complex wave functions $\Psi\in L ^2 (\R ^{dN}) \simeq \bigotimes ^N L ^2 (\R ^d)$ of $N$ variables satisfying
\begin{equation}\label{eq:Bose symmetry}
\Psi (x_1,\ldots,x_N) = \Psi(x_{\sigma(1)}, \ldots, x_{\sigma(N)}) 
\end{equation}
for any permutation $\sigma$. The ground state energy associated with~\eqref{eq:start hamil} is simply the smallest eigenvalue of the operator, and a ground state is any associated eigenfunction. Roughly speaking, one should expect that when $N\to \infty$, a ground state of~\eqref{eq:start hamil} factorizes:
\begin{equation}\label{eq:formal BEC}
\Psi (x_1,\ldots,x_N)  \approx \prod_{j=1} ^N u (x_j),  
\end{equation}
with $u\in L ^2 (\R ^d )$ a ground state of the nonlinear one-body theory. Of course~\eqref{eq:formal BEC} must be taken with care: determining in which precise sense it holds is a subtle task that has motivated much research.

In~\eqref{eq:start hamil} we have scaled the interaction term in such a way that we may expect a well-defined theory in the limit $N\to \infty$. The prefactor $(N-1) ^{-1}$ ensures that the interaction energy stays of order $N$, and the fixed parameter $0\leq \beta \leq 1$ allows to model different interacting scenarii by changing the $N$-dependence of the potential range. For illustration, let us recall how the limit problem depends on $\beta$ in three space dimensions:\begin{itemize}
\item If $\beta = 0$ we are in the mean-field (MF) regime. The range of the interaction potential is fixed but its intensity goes to zero proportionally to $N ^{-1}$. In this case the limit problem is~\eqref{eq:intro Hartree}. The relation between~\eqref{eq:intro Hartree} and~\eqref{eq:start hamil} has been clarified under special assumptions on $w$ in~\cite{BenLie-83,LieYau-87,Seiringer-11,GreSei-13,SeiYngZag-12} and in full generality in~\cite{LewNamRou-14}, taking inspiration from the earlier works~\cite{FanSpoVer-80,RagWer-89,Werner-92}.
\item If $ 0 < \beta < 1$, the scaled interaction potential converges in the sense of measures to a delta function
\begin{equation}\label{eq:interaction delta}
w_N := N ^{d\beta} w ( N ^{\beta} \cdot) \wto \left( \int_{\R ^d} w \right)\delta_0  
\end{equation}
and the limit problem becomes~\eqref{eq:intro nls} with $a = \int_{\R ^d} w$. We will refer to this case as the nonlinear Schr\"odinger (NLS) limit. It has not been considered much in the literature. However, when $w\ge 0$, the techniques invented to tackle the harder case $\beta = 1$ a fortiori apply in this case.
\item If $\beta = 1$, the interaction potential of course still converges to a delta function but, in dimension $d=3$, the limit functional is now~\eqref{eq:intro nls} with $a = 8\pi \times$(scattering length of $w$). This is due to the fact that the ground state of~\eqref{eq:start hamil} includes a non trivial correction to the ansatz~\eqref{eq:formal BEC}, in the form of a short scale correlation structure. The relationship between~\eqref{eq:start hamil} and~\eqref{eq:intro nls} in this so-called Gross-Pitaevskii (GP) limit has been established by Lieb, Seiringer and Yngvason in a seminal series of papers (we refer for example to~\cite{LieYng-98,LieSeiYng-00,LieSeiYng-05,LieSei-06}, or~\cite{LieSeiSolYng-05} for a review, see also~\cite{BruCorPicYng-08}), assuming the interactions to be purely repulsive, $w\geq 0$.
\end{itemize}
The corresponding evolution problems have also attracted a lot of attention recently. Among many references, one may consult~\cite{BarGolMau-00,AmmNie-08,FroKnoSch-09,RodSch-09,Pickl-11} for the MF limit,~\cite{AdaBarGolTet-04,ErdSchYau-07,Pickl-10} for the NLS limit and~\cite{ErdSchYau-09,Pickl-10b} for the GP limit. For the derivation of the 1D focusing NLS equation, we refer to~\cite{CheHol-13}.

\medskip

The purpose of this paper is to present a new method allowing to rigorously establish that the minimization of~\eqref{eq:intro nls} correctly describes the ground state of~\eqref{eq:start hamil} in the limit $N\to \infty$. We are limited to rather small values of $\beta$, i.e. to potentials converging slowly (with polynomial rate) to delta interactions, so we always obtain~\eqref{eq:intro nls} with 
$$ a = \int_{\R ^d} w.$$
Our method consists in establishing quantitative bounds on the difference between the $N$-body energy per particle and an $N$-dependent Hartree energy obtained by taking 
$$w_N = N ^{d\beta} w (N ^{\beta} .)$$
as an interaction potential in~\eqref{eq:intro Hartree}. The NLS functional is obtained in a second step by passing to the limit in the Hartree functional. We thus treat separately the issues of the mean-field approximation  (the validity of the ansatz~\eqref{eq:formal BEC}) and that of the short range of the interactions (the nature of the one-body state $u$ in the ansatz~\eqref{eq:formal BEC}). 

To deal with the first (main) issue we elaborate on the method we presented in~\cite{LewNamRou-14}, which is based on the quantum de Finetti theorem~\cite{Stormer-69,HudMoo-75}. However, in~\cite{LewNamRou-14} we relied heavily on compactness arguments and thus did not obtain any error bound, which is required to deal with a $N$-dependent interaction potential as we do here. Our strategy in this paper is to use the localization method in Fock space described in~\cite{Lewin-11} to reduce the problem to a finite dimensional setting in which we may employ a quantitative version of the quantum de Finetti theorem due to Christandl, K\"onig, Mitchison and Renner~\cite{ChrKonMitRen-07}.

The intuition behind this procedure is as follows. Under the assumption that the system is trapped, namely $V(x) \to \infty$ when $|x|\to \infty$, the one-body operator
\begin{equation}\label{eq:intro one body}
 H_1 := -\left( \nabla + i A (x) \right) ^2 + V (x)
\end{equation}
of the Hamiltonian~\eqref{eq:start hamil} has a discrete spectrum made of a diverging sequence of eigenvalues. In the ground state of the $N$-body Hamiltonian, the number of particles living on one-body states with high energy will clearly be small. The idea is thus to restrict our attention to the subsystem consisting of all the particles which have a one-particle energy below a given cut-off $L$, that is allowed to tend to infinity slowly. 

The part of the many-body ground state living in the low energy space will be dealt with using a quantitative version of the finite dimensional quantum de Finetti theorem~\cite{FanVan-06,ChrKonMitRen-07,Chiribella-11,Harrow-13,LewNamRou-13b}. This result roughly says that the reduced density matrices 
of any $N$-body state can be approximated by that of a convex combination of product states $|u ^{\otimes N} \rangle \langle u^{\otimes N}|$. The energy being a linear functional of the $2$-body density matrix, it is then easy to obtain the Hartree energy as a lower bound. The approximation error due to this procedure is proportional to the dimension of the low-lying energy space and inversely proportional to the number of particles. A crucial step therefore consists in optimizing over the energy cut-off $L$ (which governs the dimension of the low energy subspace) to minimize the error due to the use of the de Finetti theorem.

A related approach was used by Lieb and Seiringer in~\cite{LieSei-06} who dealt with the Gross-Pitaevskii limit of the same model. After having replaced the interaction by a smeared one involving the scattering length (a step which is not considered here), they also localized the particles to the lower energy states. Then they used coherent states in Fock space for the particles with the lower energies, in order to recover the mean-field functional. An advantage of our approach is that it works ``locally in Fock space'', that is, in any $n$-particle subspace independently of the others and thus avoids some difficulties related to the control of high values of the particle number.

Our method seems less sensitive to the type of interaction potential considered than previous approaches. In particular we do not need $w\geq 0$ to establish quantitative estimates on the difference between the $N$-body energy and the $N$-dependent Hartree energy. However, we need a sufficiently stable system to obtain the NLS energy when passing to the limit $N\to \infty$. In fact, the $N$-body system must be \emph{stable of the second kind}, namely the lowest energy per particle must be uniformly bounded. In the following, we shall only use  stability conditions for the $N$-dependent Hartree functional. Since the Hartree energy is always an upper bound to the $N$-body energy, the Hartree stability is obviously {\em necessary} for the $N$-body stability. That the Hartree stability is {\em sufficient} for the $N$-body stability is not obvious, and it comes from our quantitative estimates on the energy difference 
between the $N$-body and Hartree models. To be precise, for $\beta$ small enough (depending on $d$ and the trapping potential $V$), we obtain the NLS energy (and NLS ground state) from the $N$-body model when $N\to \infty$ in the following cases
\begin{enumerate}
\item $d=3$ and the interaction potential is {stable} in the sense that 
\begin{equation}
\iint_{\R^3\times \R ^3}|u(x)| ^2 |u(y)| ^2 w(x-y)\,dx\,dy\geq 0,\qquad \forall u\in L ^2 (\R ^d).
\label{eq:repul 3D}
\end{equation}
Assumption~\eqref{eq:repul 3D} is {\em necessary} in dimensions $d\geq3$. In fact, \eqref{eq:repul 3D} is necessary (and sufficient if $w$ is regular enough) for the stability of the second kind of {\em classical particles} interacting via the potential $w$, see Subsection~\ref{sec:stability} below. 

\item $d=2$ and the potential is stable in the sense that
$$\inf_{u\in H^1(\R^2)}\left(\frac{\displaystyle\iint_{\R^2\times \R ^2}|u(x)|^2|u(y)|^2w(x-y)\,dx\,dy}{2\norm{u}_{L^2(\R^2)}^2\norm{\nabla u}_{L^2(\R^2)}^2}\right)>-1.$$
Except for the case of equality, this assumption is again also {\em necessary}. 
\item $d=1$ under no specific assumption on the interaction potential.
\end{enumerate}

See Section \ref{sec:stability} for a detailed discussion on these assumptions. Of these three cases, the second is the one that presents the main novelty of the paper: in 2D, we present the first derivation of the attractive NLS ground state in the regime of stability. In 3D on the other hand our conditions on $\beta$ are much more stringent than those in~\cite{LieSeiYng-00,LieSei-06} where the GP limit is covered in the case $w\geq 0$. The result is thus not new but our method of proof may still be of interest, in particular because it provides error bounds, and treats on the same footing the case with and without vector potential $A$. In 1D, because of the Sobolev embedding, one may directly use contact interactions at the level of the many-body Hamiltonian~\eqref{eq:start hamil} as in~\cite{LieLin-63,SeiYngZag-12}, so the procedure of scaling a regular interaction potential is less relevant. However, we provide a derivation of the attractive 1D ground state with what seems to be an 
unprecedented precision on the error bound. 

Our main theorems are stated in the next section, and their proofs occupy the rest of the paper. 

\medskip

\noindent \textbf{Acknowledgement:} The authors acknowledge financial support from the European Research Council (FP7/2007-2013 Grant Agreement MNIQS 258023) and the ANR (Mathostaq project, ANR-13-JS01-0005-01). PTN and NR have benefited from the hospitality of the \emph{Institute for Mathematical Science} of the National University of Singapore.

\section{Main results}\label{sec:main results}


We do not aim at optimal assumptions on the potentials $V$ and $w$. We will assume that $w$ is  integrable, symmetric and uniformly bounded:
\begin{equation}\label{eq:asum w}
w\in L^1(\R^d,\R), \quad w(x) = w(-x) \quad {\rm and}\quad |w(x)| \leq C .
\end{equation}
We shall denote by
\begin{equation*}
w_N (x) := N ^{d\beta} w (N ^{\beta}x)
\end{equation*}
the scaled interaction potential, and use the same notation $w_N$ for the operator acting on the two-particle space $\gH ^2\simeq L ^2 (\R ^{2d})$ as the multiplication by $(x_1,x_2)\mapsto w_N (x_1-x_2)$. 

The fact that we consider a trapped system, as appropriate for experiments with cold atomic gases, is materialized by the following assumption on the one-body potential: 
\begin{equation}\label{eq:asum V}
V\in L^1_{\rm loc}(\R^d, \R), \quad V(x) \ge c |x| ^s-C 
\end{equation}
for some exponent $s>0$ and some constants $c>0$, $C\geq0$. 
Our estimates will essentially involve the exponent $s$. 

As for the vector potential $A$, that can model a  magnetic field or Coriolis forces acting on rotating particles, it is sufficient to assume 
\begin{equation}\label{eq:asum A}
A \in L ^2_{\rm loc} (\R ^d, \R ^d).
\end{equation}
A particularly relevant example is given by $A(x) = \Omega (-x_2,x_1,0)$ for $d=3$ and $A(x)=\Omega(-x_2,x_1)$ for $d=2$, corresponding to Coriolis forces due to a rotation at speed $\Omega$ around the $x_3$-axis, or a constant magnetic field of strength $\Omega$ pointing in this direction. 

\subsection{Error bounds for the mean-field approximation}

Our first main task is to provide quantitative bounds on the discrepancy between the ground state energy per particle corresponding to~\eqref{eq:start hamil}
\begin{equation}\label{eq:gse}
E(N) := \inf \sigma_{\gH ^N} H_N  
\end{equation}
and the nonlinear energy 
\begin{equation}\label{eq:hartree e}
\eH := \inf_{\norm{u}_{L ^2 (\R ^d)} = 1} \EH [u] 
\end{equation}
given by the minimization of the Hartree functional
\begin{multline}\label{eq:hartree f}
\EH [u] := \int_{\R ^d} \left( \left|\left(\nabla + i A(x)  \right) u(x) \right| ^2 + V(x) |u(x)| ^2 \right) dx\\ + \frac{1}{2} \iint_{\R ^d \times \R ^d} |u(x)| ^2 w_N (x-y) |u(y)| ^2 dx dy.  
\end{multline}
Note that when $\beta >0$ both $\EH$ and $\eH$ depend on $N$ and that $|\eH|\leq CN^{d\beta}$. 

It will be convenient to introduce a slightly modified Hartree energy with the interaction $w_N$ replaced by $w_N-\epsilon|w_N|$ for some $\epsilon>0$ (that will later on be taken small enough):
\begin{equation}\label{eq:hartree f modified}
\EH^\epsilon [u] := \EH[u]- \frac{\epsilon}{2} \iint_{\R ^d \times \R ^d} |u(x)| ^2 |w_N(x-y)| |u(y)| ^2 dx dy.  
\end{equation}
We denote by
\begin{equation}\label{eq:hartree e modified}
\eH^\epsilon := \inf_{\norm{u}_{L ^2 (\R ^d)} = 1} \EH^\epsilon [u] 
\end{equation}
the corresponding ground state energy, which satisfies $|\eH^\epsilon|\leq C(1+\epsilon)N^{d\beta}$. 

When $\beta > 0$ we also consider the NLS energy functional
\begin{equation}\label{eq:nls f}
\ENLS [u] = \int_{\R ^d} \left( \left|\left(\nabla + i A(x)  \right) u(x) \right| ^2 + V(x) |u(x)| ^2 \right)dx + \frac{a}{2}  \int_{\R ^d} |u(x)| ^4 dx  
\end{equation}
with ground state energy 
\begin{equation}\label{eq:nls e}
\eNLS = \inf_{\norm{u}_{L ^2 (\R ^d)} = 1} \ENLS [u] 
\end{equation}
that arises as the limit $N\to \infty$ of~\eqref{eq:hartree f}. As announced, $a$ will always be defined as
\begin{equation}\label{eq:int w}
a:=\int_{\R ^d} w . 
\end{equation}
The ground state energy $\eNLS$ is finite in dimensions $d\geq2$ only under appropriate assumptions on $a$ (e.g. $a\geq 0$).

For $\beta = 0$ it is well-known (see~\cite{LewNamRou-14} and references therein) that 
\begin{equation}\label{eq:Hartree lim}
\lim_{N\to \infty} \frac{E(N)}{N} = \eH. 
\end{equation}
For the purpose of this paper we need to provide explicit estimates in the case of a confined system and for $\beta \ge 0$, which is the content of the following

\begin{theorem}[\textbf{Error bounds for the mean-field approximation}]\label{thm:error Hartree}\mbox{}\\
We assume that~\eqref{eq:asum w},~\eqref{eq:asum V} and~\eqref{eq:asum A} hold true. 

\smallskip

\noindent$\bullet$ If $d=1$ and $\beta>0$, then we have
\begin{equation}\label{eq:energy estimate 1D}
\eH \geq \frac{E(N)}{N} \geq \eH - CN^{-\tfrac{1}{4+2/s}}.
\end{equation}

\smallskip

\noindent$\bullet$ If $d=1$ and $\beta=0$, or if $d\geq2$ and
\begin{equation}\label{eq:beta 0}
0 \leq \beta < \frac{1}{d(1+d/s+d/2)},
\end{equation}
then we have
\begin{equation}\label{eq:energy estimate}
\eH \geq \frac{E(N)}{N} \geq \eH^\epsilon - C\frac{\epsilon^{-1-d/2-d/s}}{N ^{1-d\beta(1+d/2+d/s)}},
\end{equation}
for all $0<\epsilon\leq1$.
\end{theorem}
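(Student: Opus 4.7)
The inequality $\eH \ge E(N)/N$ in both regimes is immediate from the variational principle: testing $H_N$ against the factorized state $u^{\otimes N}$ with $u$ a Hartree minimizer, the combinatorial prefactor $(N-1)^{-1}\binom{N}{2}=N/2$ turns the pair-interaction sum into the Hartree quartic, yielding $\langle u^{\otimes N},H_N u^{\otimes N}\rangle = N\eH$. The real work is the lower bound. For an (approximate) $N$-body ground state $\Psi_N$ with trace-normalized reduced density matrices $\gamma^{(k)}$, we write
\begin{equation*}
\frac{E(N)}{N}=\Tr\!\bigl[H_1\,\gamma^{(1)}\bigr]+\frac12\Tr\!\bigl[w_N\,\gamma^{(2)}\bigr]
\end{equation*}
and introduce the one-body spectral cutoff $P=P_L:=\1_{H_1\le L}$, $Q=1-P$. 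The trapping assumption $V(x)\ge c|x|^s-C$ combined with a Weyl-type count gives $\dim P_L\le C L^{d/2+d/s}$, which is the parameter that will control the de Finetti error.

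\textbf{Fock-space localization and quantitative de Finetti.} Apply the Fock-space localization procedure of~\cite{Lewin-11} to $\Psi_N$ with respect to $P$: this produces a probability distribution $p_n=\Tr G^{P,n}$ on $\{0,\dots,N\}$ and, for each $n$, a bosonic state $G^{P,n}/p_n$ on $(P\gH)^{\otimes_s n}$ whose reduced densities reconstruct the $P$-localization of $\gamma^{(k)}_{\Psi_N}$. To each such sector one applies the quantitative finite-dimensional de Finetti theorem of~\cite{ChrKonMitRen-07}: there is a probability measure $\mu_n$ on the unit sphere of $P\gH$ with
\begin{equation*}
\Bigl\|\gamma^{(2)}_{G^{P,n}/p_n}-\int|u^{\otimes 2}\rangle\!\langle u^{\otimes 2}|\,d\mu_n(u)\Bigr\|_1\le C\,\frac{\dim P_L}{n}.
\end{equation*}
Averaging over $n$ with weights $p_n$, one obtains a global probability measure $\mu$ on the unit sphere of $P\gH$ that approximates the $PP$-block of $\gamma^{(2)}_{\Psi_N}$ in trace norm with total error $\le C\dim P_L/N$.

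\textbf{Energy estimate and optimization.} Decompose the two-body density matrix according to the $(P+Q)\otimes(P+Q)$ expansion: a diagonal $PP$-block (handled by de Finetti), the cross $PQ+QP$-blocks, and the $QQ$-block. On the range of $Q$ the one-body energy is at least $L$, so the one-body bound forces $\Tr[Q\gamma^{(1)}]\le C(1+N^{d\beta})/L$. The moduli of the crossed and $QQ$ interaction contributions are controlled pointwise by $\iint|u|^2\,|w_N|\,|u|^2$ via Cauchy--Schwarz on the mixed slots, which is exactly what the $-\epsilon|w_N|$ correction in~\eqref{eq:hartree f modified} is designed to absorb at cost $\epsilon^{-1}$. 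Collecting the estimates one reaches
\begin{equation*}
\frac{E(N)}{N}\ge\int\EH^\epsilon[u]\,d\mu(u)-C\,\epsilon^{-1}\,\frac{N^{d\beta}\,\dim P_L}{N}-C\,\frac{1+N^{d\beta}}{\epsilon L}\ge\eH^\epsilon-\text{(errors)}.
\end{equation*}
Inserting $\dim P_L\sim L^{d/2+d/s}$ and optimizing $L$ to balance the two main errors yields the exponent $1-d\beta(1+d/2+d/s)$ and the $\epsilon$-prefactor $\epsilon^{-1-d/2-d/s}$ of~\eqref{eq:energy estimate}. The one-dimensional case with $\beta>0$ is simpler: since $H^1(\R)\hookrightarrow L^\infty(\R)$ and $\|w_N\|_{L^1}=\|w\|_{L^1}$ is uniform, the interaction is subcritical relative to the kinetic energy, which eliminates the need for the $\epsilon$-correction and leads to~\eqref{eq:energy estimate 1D} after the same optimization.

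\textbf{Main obstacle.} The decisive technical point is the uniform-in-$N$ handling of the $PQ$ and $QQ$ crossed interactions: $\|w_N\|_\infty=O(N^{d\beta})$ is divergent, and without the $-\epsilon|w_N|$ stabilization those cross-terms would destroy the lower bound in $d\ge 2$. This is the origin of the comparison with $\eH^\epsilon$ rather than $\eH$, and of the constraint $\beta<(d(1+d/2+d/s))^{-1}$, which is precisely the threshold at which the optimized total error vanishes as $N\to\infty$.
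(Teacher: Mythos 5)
Your overall architecture --- spectral truncation of $H_1$, Fock-space localization, the quantitative de Finetti theorem on the low-energy subspace, and the $-\epsilon|w_N|$ modification designed to absorb the off-diagonal interaction blocks --- is exactly the paper's strategy, and your upper bound is correct. But two steps, as you describe them, do not actually produce the stated lower bound.

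First, the optimization. You propose to balance a de Finetti error $\epsilon^{-1}N^{d\beta}\dim P_L/N$ against a cross-term error $(1+N^{d\beta})/(\epsilon L)$. Balancing these gives $L^{1+d/2+d/s}\sim N$ and an error of order $\epsilon^{-1}N^{d\beta-1/(1+d/2+d/s)}$, which is neither the exponent $1-d\beta(1+d/2+d/s)$ nor the prefactor $\epsilon^{-1-d/2-d/s}$ of \eqref{eq:energy estimate}. (The error terms themselves are also off: the de Finetti trace-norm error must be multiplied by the full operator norm of the truncated two-body Hamiltonian, which is of order $L$, not $\epsilon^{-1}N^{d\beta}$; and the crossed blocks contribute $\epsilon^{-1}\norm{w_N}_{L^\infty}\Tr[\Pi\gamma^{(2)}]\sim\epsilon^{-1}N^{2d\beta}/L$, i.e.\ an extra factor $N^{d\beta}$.) The paper does not balance two competing errors. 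It proves the operator inequality $H_2\ge P_-^{\otimes2}H_2^\epsilon P_-^{\otimes2}+\tfrac{L}{2}\left(P_+H_1P_+\otimes\1+\1\otimes P_+H_1P_+\right)$, which holds only under the constraint $L\ge CN^{d\beta}\epsilon^{-1}$, precisely so that the kinetic gap $L$ of the excited particles dominates the $\epsilon^{-1}|w_N|$ cross term. The cross terms are thereby absorbed completely rather than left as an error; the only surviving error is the de Finetti term $CL^{1+d/s+d/2}/N$, and evaluating it at the smallest admissible $L=CN^{d\beta}\epsilon^{-1}$ gives exactly \eqref{eq:energy estimate}.

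Second, the normalization of the de Finetti measure. The measure that reconstructs $P_-^{\otimes2}\gamma_N^{(2)}P_-^{\otimes2}$ carries the weights $\binom{N}{2}^{-1}\binom{k}{2}\Tr[G_{N,k}^-]$, whose sum is in general strictly less than $1$: it is a sub-probability measure (if instead you average the normalized sector measures with weights $p_k$, you get a probability measure that no longer reconstructs $\gamma_N^{(2)}$). Hence $\int\EH^\epsilon\,d\mu\ge\eH^\epsilon\,\mu(SP_-\gH)$ only yields $\eH^\epsilon$ when $\eH^\epsilon\le0$. When $\eH^\epsilon>0$ one must show the total mass is close to $1$, which the paper does via Jensen's inequality, comparing $\sum_k\frac{k(k-1)}{N(N-1)}\Tr[G_{N,k}^-]$ with $(\Tr[P_-\gamma_N^{(1)}])^2$ and paying with the positive term $\frac{L}{4}\Tr[P_+\gamma_N^{(1)}]$. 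This step is absent from your write-up. Finally, the $d=1$, $\beta>0$ case needs more than "the same optimization": there one takes $L\sim\epsilon^{-2}$ (independent of $N$) using the 1D Sobolev-type bound, checks $\eH^\epsilon\ge\eH-C\epsilon$ so the $\epsilon$-modification can be removed, and optimizes over $\epsilon$ in $\epsilon+\epsilon^{-3-2/s}/N$, which is where the exponent $\tfrac{1}{4+2/s}$ comes from.
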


Without more information on the potential $w$ and when $\beta>0$, it is not obvious that the problem $\eH^\epsilon$ is actually close to $\eH$ and we will only be able to go further under some additional stability assumptions on $w$. We can however make the following 

\begin{remark}[Estimates in the mean-field limit]\label{rem:MF estimates}\mbox{}\\
Using the simple estimate
$$\eH^\epsilon\geq \eH-CN^{d\beta}\epsilon$$ 
and optimizing with respect to $\epsilon$, we can immediately deduce from~\eqref{eq:energy estimate} a bound which is valid for a smaller range of $\beta$ but does not involve $\eH^\epsilon$ anymore. Namely, if 
\begin{equation}\label{eq:beta 0 worse}
0 \leq \beta < \frac{1}{d(2+d/s+d/2)},
\end{equation}
then we have
\begin{equation}\label{eq:worse energy estimate}
\eH \geq \frac{E(N)}{N} \geq \eH - CN ^{d\beta-\tfrac{1}{2+d/s+d/2}}.
\end{equation}
If $\beta=0$ then $\eH$ is independent of $N$ and we have thus the uniform bound
\begin{equation}
\eH\geq \frac{E(N)}{N}\geq \eH-CN ^{-\tfrac{1}{2+d/2+d/s}}
\label{eq:estim Hartree}
\end{equation}
which gives a bound on the rate of convergence in~\eqref{eq:Hartree lim}.\hfill $\triangle$
\end{remark}

With more information on $w$, it is clear that~\eqref{eq:estim Hartree} is not optimal. Indeed, it has been proved in~\cite{LewNamSerSol-13,Seiringer-11,GreSei-13} that when the Hartree minimization problem $\eH$ has a unique non-degenerate minimizer, then the next order is given by Bogoliubov's theory and it is of order $1/N$. This is for instance the case if\footnote{$\widehat{w}$ denotes the Fourier transform of $w$.} $\widehat{w}>0$ and the vector potential $A$ is small enough. Without more assumptions on $w$, we are not aware of any existing quantitative error bound. For instance, in a gas rotating sufficiently fast for vortices to be nucleated (see~\cite{CorPinRouYng-11a,CorPinRouYng-12,CorRinYng-07,Sei-02} and references therein), uniqueness is known to fail and our bound~\eqref{eq:estim Hartree} seems to be the best so far.

For illustration, we compute the rate of convergence we obtain in a few physically interesting situations (see Table~\ref{tab:MF}), still in the Hartree case $\beta=0$. We consider space dimensions $d=1,2,3$, and compare the harmonic oscillator case $s=2$ with the case of particles in a box where we set formally $s=\infty$. In fact our method also applies to the case of particles confined to a bounded domain and we obtain the rates for $s=\infty$ in this case.
\begin{table}[htbp]
\begin{tabular}{|c|c|c|c|}
\hline
 & $d=3$ & $d=2$ & $d=1$ \\
\hline
$\phantom{\big|}s=2$  & $N ^{-1/5}$ & $N ^{-1/4}$ & $N ^{-1/3}$\\
\hline
$\phantom{\big|}s=\infty$ & $N ^{-2/7}$  & $N ^{-1/3}$ & $N ^{-2/5}$\\
\hline
\end{tabular}
\vspace{0.2cm}
\caption{Rates of convergence to Hartree's energy ($\beta=0$).}\label{tab:MF} 
\end{table}

\vspace{-0.7cm}

\subsection{Error bounds for the Non Linear Schr\"odinger model}

If $\beta>0$, then $\eH$ and $\eH^\epsilon$ still depend on $N$ in our bounds~\eqref{eq:energy estimate 1D} and~\eqref{eq:energy estimate}. Our next task is to relate these energies to the NLS ground state energy $\eNLS$. In general, $\eH$ will not converge to $\eNLS$. For this to be true, some stability properties of the interaction potential $w$ are needed, further discussed in the following. 

\subsubsection{Stability properties of $w$}\label{sec:stability}

An even potential $W$ is called \emph{classically stable} ~\cite{Ruelle} when
\begin{equation}
\sum_{1\leq i<j\leq N} W(x_i-x_j)\geq -CN,\qquad \forall x_1,...,x_N\in\R^d,\ \forall N\geq2.
\label{eq:repulsive-Ruelle}
\end{equation}
By integrating against a factorized measure $\rho(x_1)\cdots \rho(x_N)$ and letting $N\to\ii$, we see that~\eqref{eq:repulsive-Ruelle} implies 
\begin{equation}
\iint_{\R^d\times \R ^d} \rho(x)\rho(y)W(x-y)\,dx\,dy\geq 0,\qquad \forall \rho\geq 0.
\label{eq:repulsive}
\end{equation}
Conversely, when $W$ is bounded,~\eqref{eq:repulsive} implies~\eqref{eq:repulsive-Ruelle} as is seen by taking $\rho=\sum_{i=1}^N\delta_{x_i}$. By dilating $\rho$ one can see that \eqref{eq:repulsive} implies $\int_{\R^d} W\geq0$.

The relevance of classical stability to our quantum problem depends on the dimension:

\noindent$\bullet$ When $d\ge 3$, the nonlinear term of the NLS functional is super-critical with respect to the kinetic energy and we shall take \eqref{eq:repulsive} to be our assumption. Note that it is really necessary to ensure that $e_{\rm H}$ does not converge to $-\infty$, which can be seen immediately by taking a trial state $u_N (x)=N^{d\beta/2}u(N^\beta x)$. In other words, in 3D, stability of the quantum problem requires classical stability.

\noindent$\bullet$ In dimension $d=2$, the nonlinear term of the NLS functional is critical with respect to the kinetic energy. Classical stability is then not the optimal concept and Condition~\eqref{eq:repulsive} can be relaxed a bit with the help of the kinetic energy. We say that an even potential $W$ is \emph{Hartree-stable} when 
\begin{equation}
\norm{u}_{L^2}^2\norm{\nabla u}_{L^2}^2+\frac12\iint_{\R^2\times \R^2}|u(x)|^2|u(y)|^2W(x-y)\,dx\,dy\geq0
\label{eq:Hartree stable}
\end{equation}
for all $u\in H^1(\R^2)$. Clearly, a classically-stable potential is also Hartree-stable. Replacing $u$ by $\lambda u(\lambda x)$ and taking the limit $\lambda\to 0$
, we see that such a potential $W$ must satisfy
$$\norm{u}_{L^2}^2\norm{\nabla u}_{L^2}^2+\frac12\left(\int_{\R^2}W\right)\int_{\R^2}|u(x)|^4\,dx\geq0,\qquad\forall u\in H^1(\R^2).$$
This exactly means that 
$$\int_{\R^2}W(x)\,dx\geq-a^*$$
where $a^*$ is the critical interaction strength~\cite{Weinstein-83,Zhang-00,GuoSei-13,Maeda-10} for existence of a ground state for the NLS functional, that is, $a ^* := \norm{Q}_{L ^2 (\R ^2)} ^2 $, where $Q\in H ^1 (\R ^2)$ is the unique (up to translations) positive radial solution of 
\begin{equation}\label{eq:2D model prob}
-\Delta Q + Q - Q ^3 = 0.
\end{equation}
On the other hand, using 
the Cauchy-Schwarz and Young's inequalities\footnote{with $W^-:=-\min(0,W)$ the negative part of $W$} we have
$$\iint_{\R^2\times\R^2}|u(x)|^2|u(y)|^2W(x-y)\,dx\,dy\geq - \| u^2\|_{L^2} \| |u|^2*W^-\|_{L^2} \ge  -\left(\int_{\R^2}W^-\right)\int_{\R^2}|u(x)|^4\,dx,$$
and we see that $\int_{\R^2}W^-\leq a^*$ implies~\eqref{eq:Hartree stable}. In the following we shall actually need a slightly stronger notion of Hartree-stability, obtained by requiring~\eqref{eq:Hartree stable} with a strict inequality,
\begin{equation}
\inf_{u\in H^1(\R^2)}\left(\frac{\displaystyle\iint_{\R^2\times \R^2}|u(x)|^2|u(y)|^2W(x-y)\,dx\,dy}{2\norm{u}_{L^2(\R^2)}^2\norm{\nabla u}_{L^2(\R^2)}^2}\right)>-1
\label{eq:Hartree stable strict}
\end{equation}
which plays the same role as the assumption $\int_{\R^2}W>-a^*$ in the NLS case.  

\noindent$\bullet$ In dimension $d=1$, the NLS nonlinearity is subcritical and the quantum system is always stable thanks to the kinetic energy, whence the absence of specific assumptions in this case.

\medskip

In order to simplify our presentation, from now on we use the word ``stable'' for a potential $W$ that satisfies~\eqref{eq:Hartree stable strict} in dimension $d=2$ and~\eqref{eq:repulsive} in dimension $d\geq3$. The importance of these concepts is illustrated in the following.

\begin{proposition}[\bf Convergence of $\eH$ towards $\eNLS$]\label{thm:CV Hartree}\ \\
Let $\beta>0$. We assume that~\eqref{eq:asum w},~\eqref{eq:asum V} and~\eqref{eq:asum A} hold true, and further suppose that $|x|w (x) \in L ^1 (\R ^d)$. 

\smallskip

\noindent $\bullet$ If $d=1$, or $d=2,3$ with $w$ stable, then we have
\begin{equation}\label{eq:energy Hartree NLS}
|\eH-\eNLS|\leq CN^{-\beta}.
\end{equation}
Furthermore, minimizers for $\eH$ converge in the limit $N\to\ii$ to a minimizer for $\eNLS$, after extraction of a subsequence.

\smallskip

\noindent$\bullet$ If $d=2$ and 
\begin{equation}
\inf_{u\in H^1(\R^2)}\left(\frac{\displaystyle\iint_{\R^2\times\R^2}|u(x)|^2|u(y)|^2w(x-y)\,dx\,dy}{2\norm{u}_{L^2(\R^2)}^2\norm{\nabla u}_{L^2(\R^2)}^2}\right)<-1,
\label{eq:Hartree unstable}
\end{equation}
or if $d\geq3$ and $w$ is not classically stable (in the sense of~\eqref{eq:repulsive}), then 
\begin{equation}\label{eq:Hartree unstable res}
\lim_{N\to\ii}\eH = -\ii. 
\end{equation}
\end{proposition}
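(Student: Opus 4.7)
The starting point is the symmetrization identity
\begin{equation*}
\iint |u(x)|^2 w_N(x-y)|u(y)|^2\,dx\,dy - a\int |u|^4\,dx = -\tfrac12\iint (|u(x)|^2-|u(y)|^2)^2 w_N(x-y)\,dx\,dy,
\end{equation*}
valid for $\|u\|_{L^2}=1$ by $\int w_N=a$ and $w_N(-z)=w_N(z)$. Applying the $H^1$ interpolation $\|f-f(\cdot-h)\|_{L^2}^2\leq 2|h|\,\|f\|_{L^2}\|\nabla f\|_{L^2}$ with $f=|u|^2$, then changing variables $z=N^\beta(x-y)$, one deduces
\begin{equation*}
|\EH[u]-\ENLS[u]|\leq C\,N^{-\beta}\,\|u\|_{L^4}^2\,\|\nabla|u|^2\|_{L^2}\,\int |z||w(z)|\,dz.
\end{equation*}
The moment assumption $|x|w\in L^1$ enters precisely here; a plain Taylor expansion would require the heavier $|x|^2 w\in L^1$, and the $\min$-interpolation saves one power of $N^\beta$.

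\textbf{Upper and lower bounds.} For the upper bound $\eH\leq\eNLS+CN^{-\beta}$, I take $u=u^\ast$ a ground state of $\ENLS$: it exists under the stated stability and is smooth by elliptic regularity on its Euler--Lagrange equation, so $\|u^\ast\|_{L^4}^2\,\|\nabla|u^\ast|^2\|_{L^2}$ is a finite $N$-independent constant. For the matching lower bound $\eH\geq \eNLS-CN^{-\beta}$, I use a Hartree minimizer $u_N$ as a trial state for $\ENLS$: $\eNLS\leq \ENLS[u_N]=\EH[u_N]+O(N^{-\beta})$. The stability of $w$ transfers to $w_N$ with the \emph{same} constant by a simple change of variables: scale invariance of the 2D ratio in~\eqref{eq:Hartree stable strict} under $u\mapsto N^\beta u(N^\beta\cdot)$, and preservation of the positivity~\eqref{eq:repulsive} under dilation in 3D. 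This yields $\|\nabla u_N\|_{L^2}^2+\int V|u_N|^2\leq C$ uniformly in $N$, hence $\|u_N\|_{L^4}\leq C$ by Gagliardo--Nirenberg.

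\textbf{Regularity bootstrap and convergence of minimizers.} The main technical step, and the point where I expect the real effort to lie, is to bound $\|\nabla|u_N|^2\|_{L^2}=2\||u_N|\nabla|u_N|\|_{L^2}$ uniformly in $N$. In $d=1$ this is immediate from $H^1\hookrightarrow L^\infty$. In $d=2,3$ one bootstraps from the Euler--Lagrange equation $-\Delta u_N+V u_N+(w_N\ast|u_N|^2)u_N=\mu_N u_N$, using the crucial $N$-uniform Young bound $\|w_N\ast|u_N|^2\|_{L^p}\leq \|w\|_{L^1}\|u_N\|_{L^{2p}}^2$, together with standard elliptic regularity and the localization provided by $V\to\infty$. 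For the convergence of Hartree minimizers to NLS minimizers, the uniform bounds $\|u_N\|_{H^1}+\int V|u_N|^2\leq C$ combined with $V(x)\to\infty$ give tightness and thus strong $L^p$-compactness along a subsequence; combined with $\EH[u_N]=\ENLS[u_N]+o(1)$ and weak lower semicontinuity of the kinetic part, one obtains $\ENLS[u_\infty]\leq \eNLS$, so $u_\infty$ is an NLS minimizer.

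\textbf{Divergence under instability.} Here I rely on scaling trial states. For $d\geq 3$ with $w$ not classically stable, pick $u\in C_c^\infty(\R^d)$ with $\|u\|_{L^2}=1$ and $\iint|u|^2|u|^2 w<0$, and plug in $u_N(x):=N^{d\beta/2}u(N^\beta x)$: a direct change of variables yields $\iint|u_N|^2 w_N|u_N|^2=N^{d\beta}\iint|u|^2|u|^2 w$, whereas the kinetic energy scales as $N^{2\beta}\|\nabla u\|_{L^2}^2$ and the potential and magnetic terms stay $O(1)$ (since $u_N$ concentrates near $0$). Because $d\beta>2\beta$ for $d\geq 3$, the interaction dominates and $\EH[u_N]\to -\infty$. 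For $d=2$ under~\eqref{eq:Hartree unstable}, pick $u\in C_c^\infty(\R^2)$ with $\|\nabla u\|_{L^2}^2+\tfrac12\iint|u|^2|u|^2 w<0$ and test against $u_\lambda(x):=\lambda u(\lambda x)$ with $\lambda=N^\beta$: both the kinetic and interaction terms now scale as $N^{2\beta}$ with a strictly negative total coefficient, while $V$ and $A$ contributions remain $O(1)$, so $\EH[u_\lambda]\to -\infty$ as $N\to\infty$.
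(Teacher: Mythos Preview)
Your overall strategy coincides with the paper's, but your path to the key estimate $|\EH[u]-\ENLS[u]|\leq CN^{-\beta}(\cdots)$ forces an unnecessary and potentially problematic regularity step. Your bound involves $\|\nabla|u|^2\|_{L^2}$, which in $d=2,3$ is not controlled by $\||u|\|_{H^1}$ alone; you then propose to recover it for the NLS minimizer (``smooth by elliptic regularity'') and for the Hartree minimizers $u_N$ (bootstrap on the Euler--Lagrange equation). Under the stated hypotheses one only has $A\in L^2_{\rm loc}$ and $V\in L^1_{\rm loc}$, so neither step is standard: the minimizers need not be smooth, and uniform $L^\infty$ or $W^{1,p}$ bounds on $u_N$ are not readily available from elliptic theory with such rough coefficients.

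The paper avoids this entirely. Writing the difference via the fundamental theorem of calculus (a first-order Taylor step --- so your remark that a ``plain Taylor expansion'' would require $|x|^2w\in L^1$ is off),
\[
\big|\EH[u]-\ENLS[u]\big|\leq N^{-\beta}\Big(\int_{\R^d}|z|\,|w(z)|\,dz\Big)\,\big\||u|^2\ast|\nabla|u|^2|\big\|_{L^\infty},
\]
and the convolution lets Young's inequality redistribute integrability: $\||u|^2\ast|\nabla|u|^2|\|_{L^\infty}\leq \||u|^2\|_{L^3}\cdot 2\||u|\,|\nabla|u||\|_{L^{3/2}}\leq C\|u\|_{L^6}^3\|\nabla|u|\|_{L^2}\leq C\||u|\|_{H^1}^4$ in $d\leq 3$. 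Thus only $|u|\in H^1$ is needed, and that follows directly from the energy bound and the diamagnetic inequality $|(\nabla+iA)u|\geq|\nabla|u||$ --- no bootstrap, no smoothness of minimizers. This is what buys the result under the minimal assumptions on $A$ and $V$.

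On the divergence part you use the same scaling trial states as the paper, but the claim that ``the potential and magnetic terms stay $O(1)$'' is too strong under these assumptions; what one actually checks (and what the paper does) is that they are $o(N^{d\beta})$ for $d\geq 3$ and $o(N^{2\beta})$ for $d=2$, which suffices since the interaction term is of exact order $N^{d\beta}$.
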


The decay condition $|x|w(x)\in L^1$ is a technical assumption which allows us to obtain~\eqref{eq:energy Hartree NLS} by controling the error induced by replacing the interaction potential $w_N$ in the Hartree model with the delta-potential in the NLS model. Without this assumption we still have $\eH \to \eNLS$ but with no estimate on the convergence rate. Of course~\eqref{eq:Hartree unstable res} immediately implies 
\begin{equation}\label{eq:Hartree unstable 2}
\lim_{N\to\ii}\frac{E(N)}{N} = -\ii
\end{equation}
since $E(N) \leq N\eH$. In the above result, the stability condition on $w$ is optimal in dimension $d=3$. If $w$ is classically unstable we have~\eqref{eq:Hartree unstable 2} but it could be that $\int_{\R^3}w\geq0$, and then $\eNLS$ is finite and therefore cannot be related to the limit of the $N$-body problem. In dimension $d=2$ we are only missing the case of equality in~\eqref{eq:Hartree stable strict} and the stability condition is therefore nearly optimal.

\subsubsection{Convergence of the many-body problem}

With Theorem~\ref{thm:error Hartree} and Proposition~\ref{thm:CV Hartree} at hand, it is an easy task to deduce error bounds between $E(N)/N$ and the NLS minimum energy.

\begin{theorem}[\textbf{Derivation of the NLS ground state energy}]\label{thm:deriv nls}\mbox{}\\
Let $\beta>0$. We assume that~\eqref{eq:asum w},~\eqref{eq:asum V} and~\eqref{eq:asum A} hold true, and further suppose that
$|x|w (x) \in L ^1 (\R ^d)$.

\smallskip

\noindent$\bullet$ If $d=1$, then we have
\begin{equation}\label{eq:energy estimate 1D NLS}
CN^{-\beta}+\eNLS\geq \frac{E(N)}{N}\geq \eNLS- CN^{-\beta}-CN^{-\tfrac{1}{4+2/s}}
\end{equation}
for all $\beta>0$.
\smallskip

\noindent$\bullet$ If $d=2$ and $w$ is stable (in the sense of~\eqref{eq:Hartree stable strict}), or if $d=3$ and $w-\eta|w|$ is stable (in the sense of~\eqref{eq:repulsive}) for some $0<\eta<1$, then 
\begin{equation}\label{eq:energy estimate NLS}
\eNLS+CN^{-\beta} \geq \frac{E(N)}{N} \geq \eNLS - CN^{-\beta} - CN^{-\tfrac{1-d\beta(1+d/2+d/s)}{2+d/s+d/2}},
\end{equation}
provided that $\beta$ satisfies~\eqref{eq:beta 0}.

\smallskip

\noindent$\bullet$ Finally, if $d=3$ and $w$ is stable (in the sense of ~\eqref{eq:repulsive}), then we have 
\begin{equation}\label{eq:energy estimate NLS worse}
\eNLS+CN^{-\beta} \geq \frac{E(N)}{N} \geq \eNLS - CN^{-\beta} - CN^{d\beta-\tfrac{1}{2+d/s+d/2}}
\end{equation}
provided that $\beta$ satisfies~\eqref{eq:beta 0 worse}.
\end{theorem}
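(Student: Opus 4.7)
The plan is to assemble the estimate directly from the two auxiliary results already at our disposal: the mean-field error bound of Remark~\ref{rem:MF estimates} and the Hartree-to-NLS convergence of Proposition~\ref{thm:CV Hartree}. Since only classical stability of $w$ (in the sense of \eqref{eq:repulsive}) is assumed here, the modified interaction $w - \epsilon |w|$ is generically not classically stable, and the $\epsilon$-deformed Hartree energy $\eH^\epsilon$ may collapse to $-\infty$. This is precisely why the argument used for the second bullet, which routes through \eqref{eq:energy estimate}, is not available, and why we must trade the broader range of $\beta$ in \eqref{eq:beta 0} for the cruder bound \eqref{eq:worse energy estimate} on a smaller range \eqref{eq:beta 0 worse}.

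For the upper bound I would fix an approximate minimizer $\uH$ of the $N$-dependent Hartree problem $\eH$, which is finite because classical stability is inherited by $w_N$, and insert the factorized trial state $\uH^{\otimes N}\in\gH^N$ into the many-body variational principle. A direct computation yields $E(N)/N\le \EH[\uH]\le \eH+o(1)$, after which Proposition~\ref{thm:CV Hartree} provides $\eH\le \eNLS+CN^{-\beta}$, completing the upper half of \eqref{eq:energy estimate NLS worse}.

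For the lower bound, the assumption \eqref{eq:beta 0 worse} is exactly the hypothesis needed to invoke \eqref{eq:worse energy estimate}, which gives
\begin{equation*}
\frac{E(N)}{N}\ \ge\ \eH - C N^{d\beta-\tfrac{1}{2+d/s+d/2}}.
\end{equation*}
Combining this with the one-sided inequality $\eH\ge \eNLS - C N^{-\beta}$ from \eqref{eq:energy Hartree NLS} (whose hypotheses $|x|w\in L^1(\R^d)$ and classical stability of $w$ are both in force) yields the desired conclusion. There is essentially no new obstacle specific to this bullet: the genuine work lies in the proofs of Theorem~\ref{thm:error Hartree} and Proposition~\ref{thm:CV Hartree}, and once those are granted, the present statement follows by routinely chaining the two inequalities. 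The only point worth checking is the compatibility of the admissible ranges of $\beta$ and of the stability hypotheses across the two ingredients, which is immediate from the way the hypotheses of the third bullet have been stated.
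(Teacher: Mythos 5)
Your argument for the third bullet ($d=3$, $w$ merely stable) is correct and is exactly the paper's route: chain the upper bound $E(N)/N\leq\eH\leq\eNLS+CN^{-\beta}$ with the lower bound obtained from~\eqref{eq:worse energy estimate} under~\eqref{eq:beta 0 worse} and~\eqref{eq:energy Hartree NLS}. Your diagnosis of why the sharper estimate~\eqref{eq:energy estimate} cannot be used there --- namely that $w-\epsilon|w|$ need not be stable when $w$ is only classically stable --- also matches the paper's own remark. The same trivial chaining (now of~\eqref{eq:energy estimate 1D} with~\eqref{eq:energy Hartree NLS}) disposes of the $d=1$ case, which you do not mention but which presents no difficulty.

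The genuine gap is that you never prove the second bullet, which is the main new content of the theorem (and of the paper: the 2D attractive case). There the argument is \emph{not} a direct chaining of Remark~\ref{rem:MF estimates} with Proposition~\ref{thm:CV Hartree}; one must keep the $\epsilon$-dependence. Concretely: since $w-\eta|w|$ is stable (and hence so is $w-\epsilon|w|$ for all $0\leq\epsilon\leq\eta$; in $d=2$ this follows from the strictness of~\eqref{eq:Hartree stable strict}), one applies Proposition~\ref{thm:CV Hartree} \emph{to the modified potential} $w-\epsilon|w|$ to get $|\eH^\epsilon-\eNLS^\epsilon|\leq CN^{-\beta}$, where $\eNLS^\epsilon$ is the NLS problem with coupling $\int w-\epsilon\int|w|$; one then checks (by the argument of Lemma~\ref{lem:Hartree NLS}) that $|\eNLS^\epsilon-\eNLS|\leq C\epsilon$, inserts this into~\eqref{eq:energy estimate}, and finally optimizes over $\epsilon$ to obtain the rate $N^{-\frac{1-d\beta(1+d/2+d/s)}{2+d/s+d/2}}$ on the full range~\eqref{eq:beta 0}. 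Without this step you only recover the weaker bound~\eqref{eq:energy estimate NLS worse} on the smaller range~\eqref{eq:beta 0 worse}, which is strictly less than what the second bullet asserts. You clearly know the second bullet ``routes through~\eqref{eq:energy estimate}'', but the proposal as written does not carry out the uniform-in-$\epsilon$ comparison of the two families of limiting functionals that makes that route work.
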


We remark that when $d=2$, since the stability assumption ~\eqref{eq:Hartree stable strict} is strict, the stability for $w$ implies the stability for $w-\eta|w|$ for some $0<\eta<1$. However, the same does not hold when $d=3$ and the error estimate \eqref{eq:energy estimate NLS} is really better than \eqref{eq:energy estimate NLS worse} (and similarly for the range of $\beta$).

The proofs of Theorems~\ref{thm:error Hartree} and~\ref{thm:deriv nls} also imply some estimates on the ground states themselves. In the absence of any assumption on the behavior of the Hartree functional we cannot convert them into convergence rates for the states. In the NLS limit we nevertheless obtain convergence of states and Bose-Einstein condensation: 

\begin{theorem}[\textbf{Convergence of states in the NLS limit}]\label{thm:nls state}\mbox{}\\
We use the same assumptions as in Theorem~\ref{thm:deriv nls}. Denote $\Psi_N$ a ground state of the many-body Hamiltonian~\eqref{eq:start hamil} and 
$$ \gamma_N ^{(n)}:= \tr_{n+1\to N}  |\Psi_N\rangle \langle \Psi_N | $$
its $n$-body reduced density matrix. We have, modulo a subsequence,  
\begin{equation}\label{eq:state convergence}
\lim_{N\to \infty}\gamma_N ^{(n)} = \int_{u\in \MNLS} d\mu (u) |u ^{\otimes n} \rangle \langle u ^{\otimes n}| 
\end{equation}
strongly in the trace-class for any $n\ge1$, with $\mu$ a Borel probability measure supported on 
\begin{equation}\label{eq:nls set}
\MNLS = \left\{ u \in L ^2 (\R ^d),\ \norm{u}_{L ^2} = 1,\ \ENLS [u] = \eNLS \right\}. 
\end{equation}
In particular, when the NLS ground state $\uNLS$ is unique (modulo a constant phase), we have convergence for the whole sequence
\begin{equation}\label{eq:BEC}
\lim_{N\to \infty} \gamma_N ^{(n)} =|\uNLS ^{\otimes n} \rangle \langle \uNLS ^{\otimes n}|, \mbox{ strongly in trace-class norm.}
\end{equation}
\end{theorem}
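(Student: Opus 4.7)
The overall strategy is classical for this type of mean-field/NLS derivation problem: establish compactness of the reduced density matrices, apply an (unquantified) quantum de Finetti theorem to extract a de Finetti measure $\mu$, and use energy considerations to force $\mu$ to concentrate on the set of NLS minimizers $\MNLS$. The quantitative bounds of Theorem~\ref{thm:deriv nls} (and Proposition~\ref{thm:CV Hartree}) play a crucial role here because they give the matching energy upper and lower bounds needed to pin down the support of $\mu$.

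\textbf{Step 1 (Compactness).} From the energy upper bound $\langle \Psi_N, H_N \Psi_N\rangle = E(N) \le N\eH = O(N)$ together with the uniform stability of $w$ we are assuming, one extracts the a priori bound $\tr( H_1 \gamma_N^{(1)}) \le C$, where $H_1 = -(\nabla+iA)^2 + V$ is the one-body operator. Since $V(x)\to+\infty$ as $|x|\to\infty$ by~\eqref{eq:asum V}, $H_1$ has compact resolvent, so the family $\{\gamma_N^{(1)}\}$ is precompact in trace norm. Because the $\gamma_N^{(n)}$ are controlled by $\gamma_N^{(1)}$ (bosonic symmetry and $0\le \gamma_N^{(n)}\le \mathrm{Id}$), a diagonal extraction yields a subsequence along which $\gamma_N^{(n)} \to \Gamma^{(n)}$ strongly in trace class for every $n\ge 1$.

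\textbf{Step 2 (Strong de Finetti).} The limiting sequence $\{\Gamma^{(n)}\}_{n\ge 1}$ is a consistent hierarchy of bosonic trace-class operators with $\tr \Gamma^{(n)} = 1$, obtained as a limit of reduced density matrices of symmetric $N$-body states with $N\to\infty$. The Hudson--Moore--Størmer strong quantum de Finetti theorem then provides a unique Borel probability measure $\mu$ on the unit sphere of $L^2(\R^d)$ (quotiented by phases) such that
\begin{equation*}
\Gamma^{(n)} = \int_{u\in S L^2} |u^{\otimes n}\rangle\langle u^{\otimes n}| \, d\mu(u) \qquad \forall n \ge 1.
\end{equation*}

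\textbf{Step 3 (Support of $\mu$).} Here we must convert the energy bounds into information on $\mu$. Writing the many-body energy in terms of $\gamma_N^{(1)}$ and $\gamma_N^{(2)}$ and using Fatou's lemma together with strong trace class convergence and lower semicontinuity of the one-body part, we get
\begin{equation*}
\liminf_{N\to\infty} \frac{E(N)}{N} \ge \int \Bigl( \int_{\R^d} |(\nabla+iA)u|^2 + V|u|^2 \Bigr) d\mu(u) + \liminf_{N\to\infty} \frac{1}{2} \tr\bigl( w_N \, \gamma_N^{(2)}\bigr).
\end{equation*}
For the interaction term we use that $\gamma_N^{(2)}\to \int |u^{\otimes 2}\rangle\langle u^{\otimes 2}|d\mu(u)$ strongly, combined with the assumption $|x|w\in L^1$ (cf.\ the proof of Proposition~\ref{thm:CV Hartree}) to exchange limit and integral and replace $w_N$ by $a\delta_0$. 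This produces $\int \ENLS[u]\,d\mu(u)$ in the lower bound. Combining with the matching upper bound $E(N)/N\to \eNLS$ from Theorem~\ref{thm:deriv nls} gives
\begin{equation*}
\int \ENLS[u]\, d\mu(u) \le \eNLS.
\end{equation*}
Since $\ENLS[u]\ge \eNLS$ pointwise on the unit sphere (which is where stability enters: $\ENLS$ is bounded from below by $\eNLS$ under our standing assumptions), this forces $\ENLS[u]=\eNLS$ for $\mu$-a.e.\ $u$, i.e.\ $\supp(\mu)\subset \MNLS$.

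\textbf{Step 4 (Uniqueness case).} If $\uNLS$ is the unique minimizer modulo a global phase, then for every $u\in \MNLS$ we have $|u^{\otimes n}\rangle\langle u^{\otimes n}| = |\uNLS^{\otimes n}\rangle\langle \uNLS^{\otimes n}|$, so the integral collapses to a single operator and the full sequence (not just a subsequence) converges by a standard uniqueness-of-limit argument.

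\textbf{Main obstacle.} The delicate point is Step~3: passing to the limit in the interaction term $\tr(w_N \gamma_N^{(2)})$. On the one hand $w_N$ concentrates as a Dirac mass, on the other hand $\gamma_N^{(2)}$ converges only in trace norm, so direct testing is not immediate. One must exploit the smoothness furnished by the kinetic bound (via Sobolev embedding and $|x|w\in L^1$, exactly as in Proposition~\ref{thm:CV Hartree}) to approximate $w_N$ by a mollified version on which both the convergence of $\gamma_N^{(2)}$ and the de Finetti representation can be applied. In the attractive 2D case, the strict inequality in~\eqref{eq:Hartree stable strict} is essential to ensure that this limiting procedure does not run into a collapse, and to guarantee the uniform lower bound $\ENLS[u]\ge \eNLS$ used in the last step.
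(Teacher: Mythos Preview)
Your overall architecture is reasonable, but it diverges from the paper's proof in an essential way, and the divergence hides a genuine gap.

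\textbf{Where the paper does something different.} The paper does \emph{not} pass to the limit in $\tr\big(w_N\,\gamma_N^{(2)}\big)$. Instead it recycles the quantitative machinery already built for Theorem~\ref{thm:error Hartree}. It works with the explicit localized de Finetti measures $\mu_N$ from Lemma~\ref{lem:deF-localized-state} (not the abstract St{\o}rmer--Hudson--Moody theorem), and observes that the energy lower bound in~\eqref{eq:neglect integral} actually contains the discarded nonnegative term
\[
\int_{SP_-\gH}\big(\EH^\epsilon[u]-\eH^\epsilon\big)\,d\mu_N(u)\leq C r_N \to 0.
\]
This, combined with Lemma~\ref{lem:Hartree NLS}, immediately forces the limit measure $\mu$ to sit on $\MNLS$. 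No limit in the interaction term is ever taken against the raw $\gamma_N^{(2)}$; the comparison Hartree $\to$ NLS is done state-by-state on product vectors $u^{\otimes 2}$ where the $H^1$ control of Lemma~\ref{lem:Hartree NLS} applies.

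\textbf{Why your Step~3 has a gap.} You propose to show
\[
\tr\big(w_N\,\gamma_N^{(2)}\big)\longrightarrow a\int\!\!\int_{\R^d}|u|^4\,d\mu(u),
\]
using strong trace-class convergence of $\gamma_N^{(2)}$ and ``smoothness furnished by the kinetic bound''. But $\|w_N\|_{L^\infty}\sim N^{d\beta}\to\infty$, so trace-class convergence of $\gamma_N^{(2)}$ alone is useless here. The kinetic bound gives $\tr\big[(H_1\otimes\1+\1\otimes H_1)\gamma_N^{(2)}\big]\leq C$, but in $d=2,3$ there is \emph{no} uniform operator or form bound of $|w_N|$ by $H_1\otimes\1+\1\otimes H_1$ (contrast Lemma~\ref{lem:1D} in $d=1$). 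The argument from Proposition~\ref{thm:CV Hartree} that you invoke compares $\EH[u]$ and $\ENLS[u]$ for a \emph{single} $u$ with $|u|\in H^1$; it does not give control of $\tr(w_N\,\gamma)$ for a general bosonic two-body state $\gamma$. In short, what you are attempting is exactly the ``compactness argument in~\cite{LewNamRou-14}'' that the paper explicitly flags as insufficient when $\beta>0$ (beginning of Section~\ref{sec:Hartree}); the quantitative route was developed precisely to avoid this step.

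A secondary point: your Step~1 derivation of $\tr(H_1\gamma_N^{(1)})\leq C$ from ``stability of $w$'' is too quick. Stability bounds the interaction from below, not above, so you cannot simply subtract it. The paper gets this bound by applying the energy \emph{lower} bound of Theorem~\ref{thm:deriv nls} to the perturbed Hamiltonian $H_{N,\alpha}$ with interaction $(1+\alpha)w_N$ and writing $H_N=(1+\alpha)^{-1}H_{N,\alpha}+\alpha(1+\alpha)^{-1}\sum_j H_{1,j}$.
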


Uniqueness of $\uNLS$ is ensured when $\int w$ is either $\geq 0$ or small enough in absolute value in dimensions $d=1,2$, and the vector potential $A$ is small enough. Loss of uniqueness does occur if these assumptions are not satisfied, which is intimately linked to symmetry breaking phenomena~\cite{AftJerRoy-11,AshFroGraSchTro-02,CorPinRouYng-12,CorRinYng-07,GuoSei-13,Sei-02}. When the NLS functional satisfies some stability properties one may obtain error estimates on states, see Remark~\ref{rem:error states}. 

\medskip

The main virtue of the estimates of Theorems~\ref{thm:error Hartree} and ~\ref{thm:deriv nls} is that they do not depend on special properties of the interaction potentials (apart from the necessary stability conditions). They actually also do not depend on the vector potential $A$ and they would be exactly the same if the one-body term in the many-particle Hamiltonian is perturbed by any bounded operator. As we have said, there is however no reason to think that the rates of convergence are optimal.

When $d\geq 2$, the estimates of Theorem~\ref{thm:error Hartree} deteriorate with increasing $\beta$ and this results in a decrease of the range of applicability to the NLS limit. For the typical settings described above we obtain the desired result
\begin{equation}\label{eq:nls lim}
\frac{E(N)}{N} \to \eNLS \mbox{ when } N \to \infty
\end{equation}
under the conditions listed in Table~\ref{tab:NLS}. In 3D we distinguish the case where $w-\eta |w|$ is stable for some small fixed $\eta$ where we obtain better estimates than when $w$ is simply stable. In the former case we say that $w$ is $\eta$-stable. It remains an open problem to improve the convergence rates we obtain and to generalize the results to larger values of $\beta$. In particular, we expect that one should obtain the same NLS limit for all $0<\beta <1$ when $d\ge 2$, see~\cite{LieSeiYng-00,LieSeiYng-01} for the repulsive case. 

\begin{table}[htbp]
\begin{tabular}{|c|c|c||c|}
\hline
 & $d=3$, $w$ stable & $d=3$, $w$ $\eta$-stable & $d=2$ \\
\hline
$\phantom{\big\Vert}s=2$ & $\beta < 1/15$ & $\beta < 1/12$ & $\beta < 1/6 $ \\
\hline
$\phantom{\big|}s=\infty$ & $\beta < 2/21$  & $\beta < 2/15$ & $\beta < 1/4 $\\
\hline
\end{tabular}
\vspace{0.2cm}
\caption{Maximal value of $\beta$ in the NLS limit ($\beta>0$).}\label{tab:NLS} 
\end{table}

\vspace{-0.3cm}

The rest of the paper is organized as follows: the core of the analysis is in Section~\ref{sec:Hartree} where we prove Theorem~\ref{thm:error Hartree}. Next we turn to the proof of Proposition~\ref{thm:CV Hartree} in Section~\ref{sec:Hartree to NLS} where we also conclude the proofs of Theorems~\ref{thm:deriv nls} and~\ref{thm:nls state}. 

\section{Error bounds for Hartree theory}\label{sec:Hartree}

In this section, we prove Theorem~\ref{thm:error Hartree}.

\subsection{Quantum de Finetti and localization}

Let $\gamma_N$ be an arbitrary (mixed) state in the bosonic Hilbert space $\gH^N = \bigotimes_s ^N \gH$, i.e. a positive operator satisfying $\Tr \gamma_N=1$. For every $k=1,2,...,N$, the $k$-particle density matrix $\gamma_N^{(k)}$ is obtained by taking the partial trace over all but the first $k$ variables: 
$$
\gamma_N^{(k)}= \Tr_{k+1\to N} \left[\gamma_N\right]
$$
One of the main advantages of the reduced density matrices is that we can write 
\begin{align} \label{eq:EN/N-gamma2}
\frac{E(N)}{N}= \frac{1}{N}\left\langle \Psi_N, H_N \Psi_N \right \rangle =\frac{1}{2} \Tr_{\gH^2} \left[ H_2 \gamma_N^{(2)} \right]
\end{align}
where $\gamma_N=|\Psi_N\rangle\langle\Psi_N|$ and 
$$
H_{2}=- \left( \nabla_{x_1} +i A (x_1)\right) ^2 + V(x_1)- \left( \nabla_{x_2} +i A (x_2)\right) ^2 + V(x_2) + N^{d\beta}w(N^\beta(x_1-x_2)).
$$
When $\beta=0$, $H_2$ is independent of $N$ and the limit $E(N)/N\to\eH$
essentially comes from the structure of $\gamma_N^{(2)}$ in the large $N$ limit. If one does not need an error estimate, this convergence follows easily from a compactness argument and the quantum de Finetti theorem~\cite{Stormer-69,HudMoo-75}. This is  explained in~\cite[Section~3]{LewNamRou-14}. 

When $\beta>0$, we have to deal with a $N$-dependent interaction potential and the compactness argument in~\cite{LewNamRou-14} is not sufficient. Our strategy in this paper is to use a localization method to reduce the problem to a finite dimensional setting. We may then employ the following quantitative version of the quantum de Finetti theorem, originally proved in~\cite{ChrKonMitRen-07} (see~\cite{Chiribella-11,Harrow-13,LewNamRou-13b} for variants of the proof and~\cite{FanVan-06} for an earlier result in this direction):

\begin{theorem}[\textbf{Quantitative quantum de Finetti in finite dimension}]\label{thm:CKMR} \mbox{}\\
Let $\mathfrak{K}$ be a finite dimensional Hilbert space. For every state $G_N$ on $\mathfrak{K}^N:=\bigotimes_s^N\mathfrak{K}$ and for every $k=1,2,...,N$ we have
\begin{equation} \label{eq:error-CKMR-improved}
\Tr_{\mathfrak{K}^n} \Big| G_N ^{(k)} - \int_{S\mathfrak{K}} |u^{\otimes k}\rangle \langle u^{\otimes k}| d\mu_{G_N} (u)  \Big| \le 
\frac{4k \dim \mathfrak{K}}{N} 
\end{equation}
where $$G_N^{(n)}:=\Tr_{n+1\to N} \left[G_N\right]$$ and 
\begin{equation}\label{eq:deF mes CKMR}
d\mu_{G_N}(u) :=\dim \mathfrak{K}^N \pscal{u^{\otimes N},G_N u^{\otimes N}} du 
\end{equation}
with $du$ being the normalized uniform (Haar) measure on the unit sphere $S\mathfrak{K}$.
\end{theorem}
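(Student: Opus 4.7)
The plan is to follow the direct coherent-state approach systematized by Chiribella, which gives a compact closed-form expression for the right-hand side that can be compared to $G_N^{(k)}$ by explicit algebra.

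The first step is to establish the overcompleteness identity
\[
\int_{S\mathfrak{K}} |u^{\otimes M}\rangle\langle u^{\otimes M}|\, du = \frac{P^M_{\mathrm{sym}}}{d_M}, \qquad d_M := \dim \bigotimes_s^M \mathfrak{K},
\]
valid for every $M \ge 1$, where $P^M_{\mathrm{sym}}$ is the orthogonal projector onto the symmetric subspace. This follows from $U(\mathfrak{K})$-invariance of the left-hand side combined with the irreducibility of the action of $U(\mathfrak{K})$ on $\bigotimes_s^M \mathfrak{K}$ (Schur's lemma), the scalar being pinned down by taking the trace.

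Applied with $M = N+k$, and using positivity of $G_N$ together with cyclicity of the full trace on the last $N+k$ factors, the overcompleteness identity yields the closed form
\[
\widetilde{G_N^{(k)}} := \int|u^{\otimes k}\rangle\langle u^{\otimes k}|\, d\mu_{G_N}(u) = \frac{d_N}{d_{N+k}}\,\Tr_{k+1\to N+k}\!\left[ P^{N+k}_{\mathrm{sym}}\bigl(\mathds{1}^{\otimes k} \otimes G_N\bigr)\right].
\]
Since $G_N^{(k)} = \Tr_{k+1\to N}[G_N]$, the task becomes algebraic: quantify the effect of symmetrising $\mathds{1}^{\otimes k} \otimes G_N$ over all $N+k$ positions rather than leaving the first $k$ alone. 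To evaluate the right-hand side I would decompose the symmetriser into shuffles of the first $k$ slots with the last $N$ slots,
\[
P^{N+k}_{\mathrm{sym}} = \binom{N+k}{k}^{-1}\sum_{\sigma \text{ shuffle}} U_\sigma\bigl(P^k_{\mathrm{sym}} \otimes P^N_{\mathrm{sym}}\bigr),
\]
and use $P^N_{\mathrm{sym}}G_N = G_N$ to rewrite the whole expression as an average over the $\binom{N+k}{k}$ placements of the auxiliary slots among the $N+k$ positions. Classifying these placements by the overlap $j := |\sigma(\{1,\dots,k\}) \cap \{1,\dots,k\}|$ and performing the partial trace one position at a time turns the sum into a combination of reduced density matrices $G_N^{(k-j)}$ embedded, together with $j$ identity factors, into the $k$-particle space. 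The dominant contribution comes from $j = 0$, where the auxiliary slots land entirely in the traced-out block and produce a scalar multiple of $G_N^{(k)}$.

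The final step is to estimate the explicit binomial ratios. Using $d_M = \binom{M+D-1}{M}$ with $D = \dim \mathfrak{K}$, one checks that the coefficient of the $j = 0$ term differs from $1$ by $O(kD/N)$, while the total trace of the terms with $j \ge 1$ is of the same order, so their trace norms are controlled by positivity. Combining these estimates yields $\|G_N^{(k)} - \widetilde{G_N^{(k)}}\|_1 \leq 4kD/N$. The main technical obstacle I anticipate is exactly this combinatorial bookkeeping: the permutation factors $U_\sigma$ sitting to the right of $P^k_{\mathrm{sym}}\otimes G_N$ do not simply drop out under the partial trace — one must track carefully how they interact with the symmetry of $G_N$ so that the various binomial weights collapse into the clean leading prefactor. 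The alternative "post-selection" approach of Christandl-K\"onig-Mitchison-Renner, which purifies $G_N$ before applying an informationally complete measurement, sidesteps this combinatorial step but produces the same rate and could serve as a sanity check.
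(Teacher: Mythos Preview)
The paper does not give its own proof of this theorem: it is quoted from the literature, with the original attributed to Christandl--K\"onig--Mitchison--Renner and alternative proofs cited to Chiribella, Harrow, and the authors' own companion note. So there is nothing in the paper to compare your argument against line by line.

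That said, your outline is correct and is precisely the Chiribella coherent-state route the paper cites as a variant. The key identity
\[
\int_{S\mathfrak{K}} |u^{\otimes k}\rangle\langle u^{\otimes k}|\, d\mu_{G_N}(u) = \frac{d_N}{d_{N+k}}\,\Tr_{k+1\to N+k}\!\left[ P^{N+k}_{\mathrm{sym}}\bigl(\1^{\otimes k} \otimes G_N\bigr)\right]
\]
is right (one small quibble: the phrase ``cyclicity of the full trace on the last $N+k$ factors'' is not quite what is used here---you are taking a partial trace---but the computation you wrote goes through regardless, simply by factoring $|u^{\otimes(N+k)}\rangle\langle u^{\otimes(N+k)}|$ as a tensor product). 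The shuffle decomposition of $P^{N+k}_{\mathrm{sym}}$ and the classification by the overlap $j$ is the standard way to finish; the combinatorics you flag as the obstacle is genuine but routine, and is written out in full in the authors' companion note on the quantum de Finetti theorem. With that bookkeeping done one actually obtains the constant $2$ rather than $4$; the $4$ in the statement is inherited from the original CKMR argument (your ``post-selection'' alternative), which proceeds quite differently via purification and an informationally complete measurement.
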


\begin{remark}\label{rem:Schur} The measure $\mu_{G_N}$ 
is a probability measure thanks to Schur's formula
$$ \1_{\mathfrak{K}^N} = \dim \mathfrak{K}^N \int_{S \mathfrak{K}}  |u^{\otimes k}\rangle \langle u^{\otimes k}| du.$$
We shall not need the explicit expression~\eqref{eq:deF mes CKMR}. If one could find a different construction giving a better error estimate, the convergence rates of our main theorems would be improved using the method we describe below. 
\hfill$\triangle$ 
\end{remark}

We will apply Theorem~\ref{thm:CKMR} to the low-lying energy subspace of the (magnetic) Schr\"odinger operator 
$H_1 = -\left( \nabla + i A \right)  ^2 + V$
acting on $\gH = L ^2 (\R ^d)$. We denote by $P_-$ and $P_+$ the spectral projectors above and below the energy cut-off $L$:
\begin{equation}\label{eq:projectors}
P_- = \1_{(-\infty, L)} \left( H_1 \right), \: P_+ = \1_{\gH} - P_-=P_- ^{\perp}. 
\end{equation}
Thanks to our assumption~\eqref{eq:asum V} and~\eqref{eq:asum A}, the dimension of the low-lying subspace
\begin{equation}\label{eq:NT}
N_L := \dim (P_- \gH ) = \mbox{ number of eigenvalues of } H_1 \mbox{ below } L
\end{equation}
is finite. Moreover it is controlled by a semi-classical inequality ``\`a la Cwikel-Lieb-Rosenblum'', stated in the next lemma. We refer to ~\cite[Chapter 4]{LieSei-09} for a thorough discussion of related inequalities.

\begin{lemma}[\textbf{Low-lying bound states of the one-body Hamiltonian}] \label{lem:nb bound states}\mbox{}\\
Let $V$ and $A$ satisfy~\eqref{eq:asum V} and~\eqref{eq:asum A}, respectively. Then for $L$ large enough we have
\begin{equation}\label{eq:bound NT}
N_L \leq C L ^{d/s + d/2}. 
\end{equation}
\end{lemma}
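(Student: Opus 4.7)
The claim is a standard semiclassical counting estimate, and my plan is to reduce it to a Cwikel--Lieb--Rosenblum (CLR) type inequality and then compute an elementary integral using the polynomial lower bound on $V$.

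\textbf{Step 1: semiclassical bound.} I would first establish the phase-space estimate
\begin{equation*}
N_L \leq C_d \int_{\R^d} (L - V(x) + C)_+^{d/2}\, dx
\end{equation*}
for $L$ large enough. When $d\geq 3$, this is a magnetic version of the CLR inequality applied to $H_1 - L$; it transfers from the non-magnetic operator $-\Delta + V$ to the magnetic one $H_1$ via the pointwise diamagnetic bound on heat kernels $|e^{-t H_1}(x,y)| \leq e^{-t(-\Delta + V)}(x,y)$, which yields the corresponding inequality for the Birman--Schwinger kernels (see, e.g., Avron--Herbst--Simon, or Chapter 4 of Lieb--Seiringer). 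For $d=1,2$, where CLR fails in general, I would instead use Neumann bracketing: cover $\R^d$ by unit cubes $\{Q_k\}$, pass to the Neumann realization of $H_1$ on each cube (which by min-max only lowers eigenvalues), and on each cube bound the number of eigenvalues below $L$ by $C(L - \inf_{Q_k} V + C)_+^{d/2}$ via a Weyl-type estimate for the magnetic Neumann Laplacian (uniformly in $A$, using a gauge transformation to reduce to a controlled perturbation of the free Neumann Laplacian). Summing over cubes and recognizing a Riemann sum yields the same bound.

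\textbf{Step 2: explicit integration.} Using the assumption $V(x) \geq c|x|^s - C$ from~\eqref{eq:asum V}, I plug in to obtain
\begin{equation*}
N_L \leq C \int_{\R^d} \bigl(L + C' - c|x|^s\bigr)_+^{d/2}\, dx.
\end{equation*}
The integrand is supported in the ball of radius $R := ((L+C')/c)^{1/s}$, so the substitution $x = R y$ gives
\begin{equation*}
N_L \leq C\, R^d\, (L+C')^{d/2} \int_{|y|\leq 1} (1 - |y|^s)^{d/2}\, dy \leq C L^{d/s + d/2}
\end{equation*}
for $L$ large enough, which is exactly~\eqref{eq:bound NT}.

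\textbf{Main obstacle.} The serious technical point is justifying the CLR-type estimate in Step~1 in low dimensions with only the hypothesis $A \in L^2_{\rm loc}$. In $d\geq 3$, the diamagnetic route is standard and painless; in $d=1,2$, one must rely on the trapping provided by $V$ and carry out a careful Neumann bracketing, handling the magnetic potential cube-by-cube through a local gauge choice. The integral computation in Step~2 is straightforward and tight: the exponent $d/s + d/2$ reflects the Weyl asymptotics, with $d/s$ coming from the volume of the allowed classical region $\{V\leq L\}$ and $d/2$ from the corresponding momentum-space volume, as expected.
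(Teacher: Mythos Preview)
Your approach is correct in outline but takes a longer route than the paper, and the low-dimensional part has a small gap as written.

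The paper avoids any dimension splitting by using the exponential trick
\[
N_L \;\le\; \tr\!\left[\exp\!\left(-\tfrac{H_1-L}{L}\right)\right]
\;\le\; \frac{1}{(2\pi)^d}\iint_{\R^d\times\R^d} e^{-\tfrac{|p|^2+V(x)-L}{L}}\,dx\,dp,
\]
the first inequality being trivial (each eigenvalue $\le L$ contributes $\ge 1$) and the second following from the diamagnetic/Golden--Thompson bound for magnetic Schr\"odinger semigroups (Combes--Schrader--Seiler, and Simon's \emph{Functional Integration}). Plugging in $V\ge c|x|^s-C$ and scaling $p\mapsto L^{1/2}p$, $x\mapsto L^{1/s}x$ gives $L^{d/2+d/s}$ in one stroke, with no special treatment of $d=1,2$.

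Your Step~2 is identical in spirit. In Step~1, the $d\ge3$ argument via diamagnetic CLR is fine. The gap is in $d=2$: with only $A\in L^2_{\rm loc}$ the magnetic field $B=\partial_1A_2-\partial_2A_1$ is gauge-invariant and can be arbitrarily large on a unit cube, so a gauge change cannot in general reduce $(-i\nabla+A)^2$ to a \emph{controlled} perturbation of the free Neumann Laplacian. (In $d=1$ your gauge argument does work, since any $A\in L^2_{\rm loc}(\R)$ is a gradient.) The easy fix is to use the diamagnetic inequality for the Neumann heat kernel on each cube, which yields $N\big(L;(-i\nabla+A)^2_N\big)\le e\,\tr e^{-L^{-1}(-\Delta)_N}\le C(1+L^{1/2})^d$ uniformly in $A$, and then sum over cubes as you planned. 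Either way, the paper's one-line semigroup argument is the cleaner path and is what you should cite.
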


\begin{proof}
The number of eigenvalues of $(-i\nabla+A)^2+V$ below $L$ can be estimated by
$$N_L\leq \tr_{L^2(\R^d)}\left[ \exp \left(-\frac{(-i\nabla+A)^2+V-L}{L}\right)\right]\leq \frac{1}{(2\pi)^d}\iint_{\R^d\times\R^d}e^{-\frac{|p|^2+V-L}{L}}dx\,dp,$$
using~\cite[Thm. 2.1]{ComSchSei-78} and~\cite[Thm 15.8]{Simon-05}. Using our assumption~\eqref{eq:asum V} that $V(x)\geq c|x|^s-C$ and changing variables gives the result.
\end{proof}

We will combine the de Finetti theorem and the localization method in Fock space,  which provides the correct way of restricting a quantum $N$-body state to a subspace of $\gH$. Let us quickly recall this procedure, following the notation of~\cite{Lewin-11}.

Let $\gamma_{N}$  
be an arbitrary $N$-body (mixed) state.
With the given projections $P_-,P_+$, there are localized states $G_{N} ^-, G_N ^+ $ in the Fock space 
$$\cF(\gH)=\C\oplus\gH\oplus\gH^2\oplus\cdots$$
of the form
\begin{equation}
G_N ^{\pm} = G_{N,0}^ {\pm} \oplus G_{N,1}^ {\pm} \oplus\cdots\oplus G_{N,N}^ {\pm} \oplus0\oplus\cdots 
\label{eq:def_localization}
\end{equation}
with the crucial property that their reduced density matrices satisfy 
\begin{equation}
P_{\pm}^{\otimes n} \gamma^{(n)}_{N} P_{\pm} ^{\otimes n} = \left(G_N ^{\pm}\right)  ^{(n)}={N\choose n}^{-1}\sum_{k=n}^N{k\choose n}\tr_{n+1\to k}\left[G^{\pm}_{N,k}\right]
\label{eq:localized-DM} 
\end{equation}
for any $0 \leq n \leq N$. Here we use the convention that 
$$\gamma_N^{(n)}:=\Tr_{n+1\to N} [\gamma_N],$$
which differs from the convention of~\cite{Lewin-11}, whence the different numerical factors in~\eqref{eq:localized-DM}.

The relations \eqref{eq:localized-DM} determine the localized states $G_{N}^-, G_{N}^+$ uniquely and they ensure that $G_N ^-$ and $G_N ^+$ are (mixed) states on the Fock spaces $\cF (P_- \gH)$ and $\cF (P_+ \gH)$, respectively:
\bq\label{eq:nomalization-localized-state}
\sum_{k=0}^N \tr \left[ G_{N,k}^-\right] = \sum_{k=0}^N \tr \left[ G_{N,k}^+\right]=1.
\eq

Due to~\eqref{eq:EN/N-gamma2}, we are mainly interested in the two-particle density matrices. Applying the quantitative de Finetti Theorem~\ref{thm:CKMR} to the localized state $G_N^-$, we obtain the following

\begin{lemma} [\textbf{Quantitative quantum de Finetti for the localized state.}] \label{lem:deF-localized-state}\mbox{}\\
Let $\gamma_{N}$ be an arbitrary $N$-body (mixed) state and for every $L>0$, we have
$$
\Tr_{\gH^2} \left| P_-^{\otimes 2} \gamma_{N}^{(2)} P_-^{\otimes 2} - \int_{SP_-\gH} |u^{\otimes 2}\rangle \langle u^{\otimes 2}| d\mu_N(u)\right| \le \frac{8 N_L}{N}
$$
where
\bq \label{eq:def-mu-N-localized}
d\mu_N(u) = \sum_{k=2}^N\frac{k(k-1)}{N(N-1)}d\mu_{N,k}(u), \quad d\mu_{N,k}(u) =  \dim (P_-\gH)_s^k \pscal{u^{\otimes k},G_{N,k}^- u^{\otimes k}} du.
\eq
\end{lemma}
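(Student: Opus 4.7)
The plan is to combine the Fock-space localization formula \eqref{eq:localized-DM} with Theorem~\ref{thm:CKMR} applied sector by sector to the $k$-particle components $G_{N,k}^{-}$ of the localized state $G_N^-$. Since $G_N^-$ is supported on the Fock space of the finite-dimensional space $P_-\gH$, each component $G_{N,k}^-$ lives on a symmetric tensor product $(P_-\gH)_s^{k}$ of a space of dimension $N_L = \dim P_-\gH$, which is the regime in which the quantitative de Finetti theorem applies.

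First, I would recall from \eqref{eq:localized-DM} the identity
\[
P_-^{\otimes 2}\gamma_N^{(2)}P_-^{\otimes 2}
=\binom{N}{2}^{-1}\sum_{k=2}^N \binom{k}{2}\,\Tr_{3\to k}\bigl[G_{N,k}^-\bigr]
=\sum_{k=2}^N \frac{k(k-1)}{N(N-1)}\,\Tr_{3\to k}\bigl[G_{N,k}^-\bigr],
\]
which displays the left-hand side as a weighted sum of 2-body reductions of the sectors $G_{N,k}^-$. Setting $\lambda_k := \Tr[G_{N,k}^-]$, the relation \eqref{eq:nomalization-localized-state} gives $\sum_{k=0}^N \lambda_k = 1$, and for each $k\ge 2$ with $\lambda_k>0$ the normalized operator $\lambda_k^{-1}G_{N,k}^-$ is a bona fide $k$-body state on $(P_-\gH)_s^{k}$.

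Next, I would apply Theorem~\ref{thm:CKMR} to $\lambda_k^{-1}G_{N,k}^-$ with $\mathfrak{K}=P_-\gH$ and tracing out to $n=2$: this yields
\[
\Tr_{\gH^2}\Bigl|\Tr_{3\to k}[G_{N,k}^-] - \int_{SP_-\gH} |u^{\otimes 2}\rangle\langle u^{\otimes 2}|\, d\mu_{N,k}(u)\Bigr|
\;\le\; \frac{8 N_L\,\lambda_k}{k},
\]
where $d\mu_{N,k}$ is exactly the measure in~\eqref{eq:def-mu-N-localized}, obtained by multiplying the CKMR measure for $\lambda_k^{-1}G_{N,k}^-$ by $\lambda_k$ (this also handles trivially the case $\lambda_k=0$). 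I would then insert this estimate into the triangle inequality for the weighted sum above; since $d\mu_N = \sum_{k\ge 2}\frac{k(k-1)}{N(N-1)}d\mu_{N,k}$, the claim reduces to
\[
\sum_{k=2}^N \frac{k(k-1)}{N(N-1)}\cdot \frac{8 N_L\,\lambda_k}{k}
=\frac{8 N_L}{N(N-1)}\sum_{k=2}^N (k-1)\lambda_k
\;\le\;\frac{8 N_L}{N},
\]
using $k-1\le N-1$ and $\sum_k \lambda_k\le 1$. There is no real technical obstacle here: the only subtle point is the careful bookkeeping between the unnormalized sectors $G_{N,k}^-$ and the normalized states to which Theorem~\ref{thm:CKMR} literally applies, together with the identification of $d\mu_{N,k}$ from the explicit formula~\eqref{eq:deF mes CKMR}; the saving of an extra factor $k$ inside the sum, coming from the $1/k$ rate in CKMR, is precisely what compensates the $k(k-1)$ weight and produces the clean bound $8N_L/N$.
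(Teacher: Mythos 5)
Your proposal is correct and follows the same route as the paper: apply the CKMR theorem sector by sector to the (normalized) components $G_{N,k}^-$ on $(P_-\gH)_s^k$, obtaining the $8N_L\lambda_k/k$ rate, then sum against the weights $\binom{N}{2}^{-1}\binom{k}{2}$ from~\eqref{eq:localized-DM} and use $\sum_k\lambda_k=1$. The bookkeeping between unnormalized sectors and normalized states, and the identification of $d\mu_{N,k}$, are handled exactly as in the paper's proof.
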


\begin{proof} Applying the quantitative de Finetti Theorem~\ref{thm:CKMR} we have 
\begin{align*}
\tr_{\gH ^2}\left| \Tr_{3\to k}\left[G_{N,k}  ^-\right]  - \int_{SP_- \gH} |u ^{\otimes 2}\rangle \langle u ^{\otimes 2} | d\mu_{N,k}(u) \right| \leq 8 \frac{N_L}{k} \tr_{\gH ^k} \left[ G_{N,k} ^-\right]
\end{align*}
where $d\mu_{N,k}(u)= \dim (P_-\gH)_s^k \langle u^{\otimes k},G_{N,k}^- u^{\otimes k}\rangle du$.
Combining this and~\eqref{eq:localized-DM} we get
\begin{align*}
\Tr_{\gH^2} \left| P_-^{\otimes 2} \gamma_{N}^{(2)} P_-^{\otimes 2} - \int_{SP_-\gH} |u^{\otimes 2}\rangle \langle u^{\otimes 2}| d\mu_N(u)\right|  
&\leq \sum_{k=2}^N {N \choose 2}^{-1} {k \choose 2} 
\frac{8N_L}{k} \tr_{\gH ^k} \left[ G_{N,k} ^-\right]\\
&= \frac{8N_L}{N}\sum_{k=2}^N  \frac{k-1}{N-1}\tr_{\gH ^k} \left[G_{N,k} ^-\right]\leq \frac{8N_L}{N},
\end{align*}
since $\sum_{k=0}^N \tr_{\gH ^k}[G_{N,k} ^-]=1$.
\end{proof}

\begin{remark}[De Finetti measure for the $n$-body reduced density matrix]\label{rem:construct deF}\mbox{}\\
We will later apply the same idea to any $n$-body reduced density matrix of $\gamma_N$ with $n\ge 2$. The de Finetti measures obtained in this way, namely
$$
d\mu_N ^n (u) = \sum_{k=n}^N  { N \choose n} ^{-1} {k\choose n}  d\mu_{N,k}(u)= \sum_{k = n}^N\frac{k(k-1)...(k-n+1)}{N(N-1)...(N-n+1)}d\mu_{N,k}(u),
$$
a priori depend on $n$, unless $G_{N,k} ^- = 0$ for any $k=0\ldots N-1$, namely unless all the particles are $P_-$ localized. However, in some situations (e.g. in our case in Section~\ref{sec:conv state}), we can show that most particles are $P_-$ localized, and thus all these $n$-dependent measures actually converge to the same limit.\hfill$\triangle$
\end{remark}

We shall prove Theorem~\ref{thm:error Hartree} in the following subsections. The core of the proof is in Section~\ref{sec:main terms}, where Lemma~\ref{lem:deF-localized-state} plays an essential role, with some preparation in Section~\ref{sec:loc} and some final computations in Section~\ref{sec:end proof}.

\subsection{Truncated two-body Hamiltonian}\label{sec:loc}

Recall from~\eqref{eq:EN/N-gamma2} that 
$$\frac{E(N)}{N}=\frac{1}{2}\Tr[ H_2 \gamma_N^{(2)}] $$
with $\gamma_N=|\Psi_N\rangle\langle\Psi_N|$. In order to reduce to a finite dimensional setting, we will replace the Hamiltonian $H_2$ by the localized operator $P_-\otimes P_- H_2 P_-\otimes P_-$, up to a small modification of the interaction potential $w_N$. The crucial fact is that if the energy cut-off $L$ is large enough (in comparison with $\|w_N\|_{L^\infty}$), then the kinetic energy of the $P_+$ localized particles can be used to control both the localization error and the number of $P_+$ localized particles.

For convenience we introduce a two-particle operator with modified interaction:
\begin{multline}
H_2^\epsilon=-(\nabla_{x_1}-iA(x_1))^2+V(x_1)-(\nabla_{x_2}-iA(x_2))^2+V(x_2)\\+w_N(x_1-x_2)-\epsilon|w_N(x_1-x_2)|.
\label{eq:def_H_Nepsilon}
\end{multline}
An important tool is then the following. 

\begin{lemma}[\textbf{Truncated two-body Hamiltonian}]\label{lem:localize-energy}\mbox{}\\
Assuming that $0<\epsilon\leq 1$ and $L \ge C N^{d\beta}\epsilon^{-1}$ (resp. $L\geq C\epsilon^{-2}$ if $d=1$ and $\beta>0$) for a large enough constant $C$, we have  
\begin{equation} \label{eq:H2-localized-error}
H_2 \ge  P_-^{\otimes 2}H_2^\epsilon P_-^{\otimes 2} + \frac{L}{2} (P_+H_1P_+\otimes \1 +\1\otimes P_+H_1P_+).
\end{equation}
\end{lemma}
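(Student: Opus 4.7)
The bound to prove is a two-body operator inequality, and my strategy is to reduce it to an estimate purely involving $w_N$ and a non-negative ``kinetic surplus'' produced by the cut-off. Since $[H_1, P_\pm]=0$, expanding $\1 = P_- + P_+$ in each tensor factor and cancelling the terms common to both sides turns the claim, after a short algebra, into
\[
R' + w_N - P_-^{\otimes 2} w_N P_-^{\otimes 2} + \epsilon\, P_-^{\otimes 2} |w_N| P_-^{\otimes 2} \;\geq\; 0,
\]
with
\[
R' \;:=\; P_- H_1 P_- \otimes P_+ + P_+ \otimes P_- H_1 P_- + \tfrac{1}{2}\bigl(P_+ H_1 P_+ \otimes \1 + \1 \otimes P_+ H_1 P_+\bigr).
\]
After shifting $H_1$ by a finite constant (which leaves the asserted inequality invariant) I may assume $H_1 \geq 0$. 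Decomposing $\gH^{\otimes 2}$ along the four mutually orthogonal two-body projections $A := P_-^{\otimes 2}$, $B := P_+^{\otimes 2}$, $C_1 := P_- \otimes P_+$, $C_2 := P_+ \otimes P_-$, a direct computation shows that $R'$ is block-diagonal with $R' A = 0$, $R'|_B \geq L B$ and $R'|_{C_i} \geq \tfrac{L}{2} C_i$, as a consequence of the spectral bound $P_+ H_1 P_+ \geq L P_+$.

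The operator $\Delta := w_N - P_-^{\otimes 2} w_N P_-^{\otimes 2} + \epsilon P_-^{\otimes 2} |w_N| P_-^{\otimes 2}$ does mix the four blocks: its diagonal parts are $\epsilon A |w_N| A$ and $X w_N X$ for $X \in \{B, C_1, C_2\}$, while its off-diagonal parts have the form $X w_N Y$ with $X \neq Y$. I would control each off-diagonal by a Cauchy--Schwarz argument based on the factorization $w_N = |w_N|^{1/2} \sigma |w_N|^{1/2}$ with $|\sigma| \leq 1$, giving
\[
X w_N Y + Y w_N X \;\geq\; -\eta_{XY}\, X |w_N| X - \eta_{XY}^{-1}\, Y |w_N| Y
\]
for any $\eta_{XY} > 0$. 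Taking $\eta_{AX} = \epsilon/3$ for each of the three cross-terms involving $A$ and $\eta_{XY} = 1$ for the three remaining pairs in $\{B, C_1, C_2\}$, the $A$ block receives exactly a total penalty $-\epsilon A |w_N| A$ which is cancelled by the $+\epsilon A |w_N| A$ diagonal contribution, whereas each of $B, C_1, C_2$ only accumulates a penalty of the form $-(C/\epsilon) X |w_N| X$. Together with the trivial lower bound $X w_N X \geq -X |w_N| X$ on the diagonal, this yields
\[
R' + \Delta \;\geq\; R' - \frac{C}{\epsilon}\bigl( B |w_N| B + C_1 |w_N| C_1 + C_2 |w_N| C_2 \bigr).
\]

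It only remains to upper bound $X |w_N| X$ for $X \in \{B, C_1, C_2\}$ by a small multiple of $R'|_X$ plus a controlled identity contribution on that block. In the generic situation ($d \geq 2$, or $d = 1$ with $\beta = 0$) I would simply use the uniform bound $\|w_N\|_\infty \leq \|w\|_\infty N^{d\beta}$, giving $X |w_N| X \leq C N^{d\beta} X$; the kinetic lower bounds $R'|_B \geq LB$ and $R'|_{C_i} \geq \tfrac{L}{2}C_i$ then absorb these terms provided $L \geq C N^{d\beta}/\epsilon$ with a sufficiently large constant, which is precisely the assumption of the lemma. For the remaining case $d=1$, $\beta>0$, where $\|w_N\|_\infty \sim N^\beta$ is too large, I would instead use the one-dimensional Sobolev embedding $\|u\|_\infty^2 \leq C \|u\|_{L^2} \|u'\|_{L^2}$. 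Integrating in the transverse variable it upgrades to the $N$-independent two-body operator bound
\[
|w_N(x_1 - x_2)| \;\leq\; C \|w\|_{L^1}\bigl( \alpha\, (-\partial_{x_j}^2) + \alpha^{-1} \bigr), \qquad \forall \alpha > 0,\; j = 1, 2,
\]
and after gauging away the one-dimensional vector potential (always possible in dimension one) we have $-\partial_{x_j}^2 \leq H_1 + C$. Choosing $\alpha$ proportional to $\epsilon$ absorbs the kinetic part into $R'|_X$ at the cost of a residual identity contribution of order $\epsilon^{-2}$, which forces the weaker condition $L \geq C/\epsilon^2$ claimed for this case. The most delicate point is the precise calibration of the $\eta_{XY}$ and $\alpha$ parameters so that the $A$ block is exactly neutralized by the $\epsilon$-correction while the three remaining blocks are swallowed by the kinetic gap with the sharpest possible dependence on $\epsilon$ in each regime.
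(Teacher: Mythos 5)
Your proof is correct and follows essentially the same route as the paper's: expand $\1 = P_- + P_+$ in each tensor factor, use the spectral gap $P_+ H_1 P_+ \geq L P_+$ to produce a kinetic surplus of order $L$ outside the range of $P_-^{\otimes 2}$, control the off-diagonal interaction terms by the weighted Cauchy--Schwarz inequality $XwY + YwX \geq -\eta\, X|w|X - \eta^{-1} Y|w|Y$ with weight of order $\epsilon$ toward the $P_-^{\otimes 2}$ block (which is exactly what generates the $-\epsilon|w_N|$ correction in $H_2^\epsilon$), and absorb the resulting $\epsilon^{-1}|w_N|$ terms on the high-energy blocks using $\|w_N\|_{L^\infty} \leq C N^{d\beta}$, resp.\ the one-dimensional Sobolev bound $\|u\|_\infty^2 \leq C\|u\|_{L^2}\|u'\|_{L^2}$ when $d=1$ and $\beta>0$. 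The only differences --- your four-block decomposition versus the paper's single complementary projection $\Pi = \1 - P_-^{\otimes 2}$, and your gauge-transformation argument versus the paper's use of the diamagnetic inequality in the 1D step --- are cosmetic.
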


The estimates in the $d=1$ case use a simple lemma that we prove at the end of this subsection:

\begin{lemma}[\textbf{Sobolev-type inequality in dimension $d=1$}]\label{lem:1D}\mbox{}\\
For any even potential $W\in L^1(\R)$, we have in dimension $d=1$
\begin{equation}\label{eq:1D op bound}
H_1 \otimes \1 + \1 \otimes H_1+W(x-y)\geq -C\left(\int_\R W^-\right)^2-C.
\end{equation}
\end{lemma}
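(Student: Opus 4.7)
The plan is to exploit the Sobolev embedding $H^1(\R)\hookrightarrow L^\infty(\R)$, which is special to dimension one, to control the (possibly singular) interaction $W(x-y)$ by the one-body kinetic energies. Working at the level of quadratic forms on $u\in C_c^\infty(\R^2)$, I would first dispense with the vector potential $A$ and the trapping potential $V$: the diamagnetic inequality $\bigl|\partial_x|u(x,y)|\bigr|\leq \bigl|(\partial_x+iA(x))u(x,y)\bigr|$ (valid for $A\in L^2_{\loc}$), together with the lower bound $V\geq -C$ from~\eqref{eq:asum V}, reduce the claim to showing
\begin{equation*}
\int_{\R^2}\Bigl(\bigl|\partial_x|u|\bigr|^2 + \bigl|\partial_y|u|\bigr|^2 + W(x-y)|u|^2\Bigr)dx\,dy \geq -C\bigl(\|W^-\|_{L^1}^2+1\bigr)\|u\|_{L^2(\R^2)}^2.
\end{equation*}

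The next step is to estimate the negative part $\iint W^-(x-y)|u|^2$ via the one-dimensional Sobolev--Gagliardo--Nirenberg inequality $\|f\|_{L^\infty(\R)}^2\leq 2\|f\|_{L^2(\R)}\|f'\|_{L^2(\R)}$. For almost every $y\in\R$ I would apply this to $f=|u(\cdot,y)|$ and combine it with the pointwise bound $|u(x,y)|^2\leq \bigl\||u(\cdot,y)|\bigr\|_{L^\infty(\R_x)}^2$ to get
\begin{equation*}
\int_\R W^-(x-y)|u(x,y)|^2\,dx \leq \|W^-\|_{L^1}\,\bigl\||u(\cdot,y)|\bigr\|_{L^\infty(\R_x)}^2 \leq 2\|W^-\|_{L^1}\|u(\cdot,y)\|_{L^2_x}\bigl\|\partial_x|u(\cdot,y)|\bigr\|_{L^2_x}.
\end{equation*}
Integrating in $y$ and applying the Cauchy--Schwarz inequality in the $y$ variable followed by Fubini yields
\begin{equation*}
\iint_{\R^2} W^-(x-y)|u|^2\,dx\,dy \leq 2\|W^-\|_{L^1}\,\|u\|_{L^2(\R^2)}\bigl\|\partial_x|u|\bigr\|_{L^2(\R^2)}.
\end{equation*}

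The conclusion follows by AM--GM: the bound $2ab\leq a^2+b^2$ lets one absorb $\bigl\|\partial_x|u|\bigr\|_{L^2(\R^2)}^2$ into the $x$-kinetic energy, leaving only $\|W^-\|_{L^1}^2\|u\|_{L^2(\R^2)}^2$ and thus producing exactly the quadratic-in-$\int W^-$ term in the statement. I do not expect a significant obstacle: the only mildly delicate point is justifying the diamagnetic inequality for merely $L^2_{\loc}$ magnetic potentials, which is classical (e.g.\ via approximation of $A$ by smooth fields) and ensures that $|u|\in H^1(\R^2)$ whenever $u$ lies in the magnetic form domain of $H_1\otimes\1+\1\otimes H_1$.
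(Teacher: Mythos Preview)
Your proposal is correct and follows essentially the same approach as the paper: reduce to the non-magnetic case via the diamagnetic inequality and $V\geq -C$, then control $W^-$ by the kinetic energy through the one-dimensional Sobolev bound $\|f\|_{L^\infty}^2\leq C\|f\|_{L^2}\|f'\|_{L^2}$ combined with AM--GM. The only cosmetic difference is that the paper passes to center-of-mass/relative coordinates to reduce the two-body operator to a genuine one-body Schr\"odinger operator $-d^2/dr^2 - W^-(r)$, whereas you slice in $x$ for fixed $y$ and integrate; both routes are equally short and yield the same bound.
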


\begin{proof}[Proof of Lemma~\ref{lem:localize-energy}] 
For the two-body non-interacting Hamiltonian 
$$H_1 \otimes \1 + \1 \otimes H_1=-(\nabla_{x_1}-iA(x_1))^2+V(x_1)-(\nabla_{x_2}-iA(x_2))^2+V(x_2),$$
we use that 
$$H_1=P_-H_1P_-+P_+H_1P_+$$
and obtain
\begin{align*}
 H_1 \otimes \1 + \1 \otimes H_1 &=  P_-H_1P_- \otimes \1 + \1 \otimes P_-H_1P_-  +  P_+H_1P_+ \otimes \1 + \1 \otimes P_+H_1P_+\\
&=(P_-)^{\otimes 2} \Big( H_1 \otimes \1 + \1 \otimes H_1 \Big) (P_-)^{\otimes 2}+P_-H_1P_- \otimes P_+ \\
& \quad + P_+ \otimes P_-H_1P_-  +  P_+H_1P_+ \otimes \1 + \1 \otimes P_+H_1P_+.
\end{align*}
Since $H_1$ is bounded from below, we have $P_-H_1P_-\geq -CP_-\geq -C$ and we obtain
\begin{equation}\label{eq:loc 1 body lower bound}
H_1 \otimes \1 + \1 \otimes H_1  \ge (P_-)^{\otimes 2} \Big( H_1 \otimes \1 + \1 \otimes H_1 \Big) (P_-)^{\otimes 2}  +  P_+(H_1-C)P_+ \otimes \1 + \1 \otimes P_+(H_1-C)P_+.
\end{equation}

We now consider the interaction term and use the shorthand notation $w_N$ for the two-body multiplication operator $(x_1,x_2)\mapsto w_N(x_1-x_2)$. We write again
\begin{align}
w_N &= \left( P_- + P_+ \right) ^{\otimes 2} w_N \left( P_- + P_+ \right) ^{\otimes 2}\nn\\
&= (P_-)^{\otimes 2}w_N(P_-)^{\otimes 2}+(P_-)^{\otimes 2}w_N\Pi+\Pi w_N(P_-)^{\otimes 2}+\Pi w_N\Pi,\label{eq:split interaction}
\end{align}
with the orthogonal projection $\Pi:=P_-\otimes P_++P_+\otimes P_-+P_+\otimes P_+$. To bound the  error terms we use the inequality
$$P A Q+ Q A P \geq -\epsilon P|A|P-\epsilon^{-1} Q|A|Q,$$ 
valid for any self-adjoint operator $A$, and any orthogonal projectors $P,Q$. If $A$ is positive this follows by writing\footnote{We copy the proof that the diagonal part of a positive hermitian matrix controls the off-diagonal part.}
$$ \left( \eps ^{1/2} P \pm \eps ^{-1/2} Q \right) A \left( \eps ^{1/2} P \pm \eps ^{-1/2} Q \right) \geq 0$$
and the general case is obtained by using the same bound applied to $A^+$ and $A ^-$ separately. 
We thereby deduce that
$$(P_-)^{\otimes 2}w_N\Pi+\Pi w_N(P_-)^{\otimes 2}\geq -\epsilon (P_-)^{\otimes 2}|w_N|(P_-)^{\otimes 2}-\epsilon^{-1}\Pi|w_N|\Pi$$
and, therefore, 
\begin{equation}
w_N\geq (P_-)^{\otimes 2}(w_N-\epsilon |w_N|)(P_-)^{\otimes 2}-(1+\epsilon^{-1})\Pi |w_N| \Pi.
\label{eq:proof d arbitrary}
\end{equation}
Using now $|w_N|\leq CN^{d\beta}$ and collecting our estimates we find 
\begin{align*}
H_2&\geq (P_-)^{\otimes 2}H^\epsilon_2(P_-)^{\otimes 2}+ P_+(H_1-C)P_+ \otimes \1 + \1 \otimes P_+(H_1-C)P_+-C(1+\epsilon^{-1})N^{d\beta}\Pi\\
&\geq (P_-)^{\otimes 2}H^\epsilon_2(P_-)^{\otimes 2}+\frac{1}{2}(P_+H_1P_+\otimes\1+\1\otimes P_+H_1P_+)+ \left(\frac{L}{2}-C-C(1+\epsilon^{-1}\right)N^{d\beta})\Pi
\end{align*}
and the result follows in dimensions $d\geq2$ or when $\beta=0$ and $d=1$.

When $d=1$ and $\beta>0$, we may come back to~\eqref{eq:proof d arbitrary} and use Lemma~\ref{lem:1D} to obtain
\begin{align*}
H_2&\geq (P_-)^{\otimes 2}H^\epsilon_2(P_-)^{\otimes 2}+\frac12(P_+H_1P_+\otimes\1+\1\otimes P_+H_1P_+)\\
&\qquad\qquad + \Pi\Big(L/4-C+(H_1\otimes\1+\1\otimes H_1)/4-(1+\epsilon^{-1}\big)|w_N|\Big)\Pi\\
&\geq (P_-)^{\otimes 2}H^\epsilon_2(P_-)^{\otimes 2}+\frac12(P_+H_1P_+\otimes\1+\1\otimes P_+H_1P_+)+ \Pi\big(L/4-C- C\epsilon^{-2}\big)\Pi.
\end{align*}
\end{proof}

\begin{proof}[Proof of Lemma~\ref{lem:1D}]
In 1D we can bound 
\begin{equation} \label{eq:rel-bound-Sobolev-1D}
\int_{\R}|W(x)| |u(x)|^2 \le \left( \int_{\R} |W| \right) \|u\|_{L^\infty(\R)}^2 \le C \|W\|_{L^{1}(\R)} \|u\|_{L^2(\R)} \left\| u' \right\|_{L^2(\R)}
\end{equation}
for every $u\in H^1(\R)$. We conclude that 
$$-d^2/dx^2+W\geq -d^2/dx^2-W^-\geq -C\norm{W^-}_{L^1(\R)}^2-C.$$
Using the pointwise diamagnetic inequality $|(\nabla + i A) u| \geq  | \nabla |u|\,|$ and removing the center of mass, the estimate is similar for the two-particle operator.
\end{proof}

\subsection{Bound on the localized energy}\label{sec:main terms} Now we turn to the main step of the proof of Theorem~\ref{thm:error Hartree}: we compare the energy corresponding to the localized operator in the right side of (\ref{eq:H2-localized-error}) with the Hartree energy $\eH$. 

Let $\gamma_{N}=|\Psi_N \rangle \langle \Psi_N|$ be the density matrix of a ground state $\Psi_N \in \gH^N$ of the $N$-body Hamiltonian $H _N$ (known to exist in our setting of trapped systems). We have 
\begin{lemma}[\textbf{Lower bound to the localized energy}]\label{lem:main terms}\mbox{}\\
For $0<\epsilon\leq 1$, $L \ge C N^{d\beta}\epsilon^{-1}$ (resp. $L\geq C\epsilon^{-2}$ if $d=1$ and $\beta>0$) and $N$ large enough, we have  
\begin{equation}\label{eq:low bound main}
\frac{1}{2}\tr \left[ P_-^{\otimes 2} H_2^\epsilon P_-^{\otimes 2}\gamma_{N}^{(2)} \right]+\frac{L}4\tr \left[ P_+\gamma_N^{(1)} \right] \ge \eH^\epsilon - \frac{CL^{1+d/s+d/2}}{N}.
\end{equation}
\end{lemma}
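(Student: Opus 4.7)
The plan is to reduce the left-hand side to an integral of $\EH^\epsilon$ against the de Finetti measure on the low-energy subspace provided by Lemma~\ref{lem:deF-localized-state}, and then to convert that integral into the Hartree lower bound $\eH^\epsilon$ by controlling both the de Finetti approximation error and the missing mass of the measure.

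First I would combine the duality $|\tr[AB]|\le\|A\|_\infty\|B\|_1$ with Lemma~\ref{lem:deF-localized-state}. Since $\tfrac12\langle u^{\otimes 2},H_2^\epsilon u^{\otimes 2}\rangle = \EH^\epsilon[u]$ for every normalized $u\in SP_-\gH$, this yields
$$
\tfrac12\tr\bigl[P_-^{\otimes 2}H_2^\epsilon P_-^{\otimes 2}\gamma_N^{(2)}\bigr] \ge \int \EH^\epsilon[u]\, d\mu_N(u) - \tfrac{4 N_L}{N}\bigl\|P_-^{\otimes 2}H_2^\epsilon P_-^{\otimes 2}\bigr\|_\infty \ge \eH^\epsilon\int d\mu_N(u) - \tfrac{4 N_L}{N}\bigl\|P_-^{\otimes 2}H_2^\epsilon P_-^{\otimes 2}\bigr\|_\infty.
$$
To handle the operator norm I would use $P_-H_1P_-\le LP_-$, which bounds the non-interacting contribution by $2L$, together with the following estimate for the interaction: when $d\ge 2$, $\|w_N\|_\infty\le CN^{d\beta}\le CL\epsilon$ by hypothesis; when $d=1$, a Sobolev-type argument combined with the diamagnetic inequality shows $\|u\|_{L^\infty}^2\le CL^{1/2}$ for every normalized $u\in P_-\gH$ (using $\|(\nabla+iA)u\|_{L^2}^2 \le \langle u, H_1 u\rangle + C\le L + C$), and fibering in one variable gives $\|P_-^{\otimes 2}|w_N|P_-^{\otimes 2}\|_\infty\le C\|w\|_{L^1}L^{1/2}\le CL$. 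In both regimes the total operator norm is $\le CL$, and combined with $N_L\le CL^{d/s+d/2}$ from Lemma~\ref{lem:nb bound states} this produces the advertised de Finetti error $CL^{1+d/s+d/2}/N$.

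The delicate point is that $\mu_N$ is not a probability measure. From~\eqref{eq:def-mu-N-localized} and $\sum_k\tr[G_{N,k}^-]=1$ I would write
$$
1-\int d\mu_N(u) = \sum_{k=0}^N\Bigl(1-\tfrac{k(k-1)}{N(N-1)}\Bigr)\tr[G_{N,k}^-],
$$
and combine the elementary bound $1-k(k-1)/(N(N-1))\le 4(N-k)/N$ with the identity $\tr[P_+\gamma_N^{(1)}]=N^{-1}\sum_k(N-k)\tr[G_{N,k}^-]$ obtained from~\eqref{eq:localized-DM} at $n=1$ to conclude $0\le 1-\int d\mu_N(u)\le 4\tr[P_+\gamma_N^{(1)}]$. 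The a priori bound $|\eH^\epsilon|\le C(1+\epsilon)N^{d\beta}$ (respectively $|\eH^\epsilon|\le C$ in $d=1$, by Lemma~\ref{lem:1D}) then gives
$$
\eH^\epsilon\int d\mu_N(u)\ge \eH^\epsilon - CN^{d\beta}\tr[P_+\gamma_N^{(1)}]
$$
(with $CN^{d\beta}$ replaced by $C$ when $d=1$). Under the standing assumption $L\ge CN^{d\beta}\epsilon^{-1}$ (resp.\ $L\ge C\epsilon^{-2}$) with the constant chosen large enough, the coefficient $CN^{d\beta}$ is dominated by $L/4$, so this error is absorbed by the spare kinetic-energy term $(L/4)\tr[P_+\gamma_N^{(1)}]$.

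I expect this last step to be the main obstacle: the non-normalization of the de Finetti measure must be matched \emph{exactly} against the $P_+$-localized kinetic energy produced by Lemma~\ref{lem:localize-energy}, and this matching is only possible because the lower bound on $L$ imposed in terms of $N^{d\beta}$ and $\epsilon^{-1}$ ensures that the a priori size of $|\eH^\epsilon|$ is small compared to $L$.
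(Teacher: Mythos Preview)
Your argument is correct and follows the same overall architecture as the paper: apply Lemma~\ref{lem:deF-localized-state}, bound the operator norm $\|P_-^{\otimes 2}H_2^\epsilon P_-^{\otimes 2}\|$ by $CL$ (using $\|w_N\|_\infty\le CN^{d\beta}$ for $d\ge 2$ and the Sobolev bound of Lemma~\ref{lem:1D} for $d=1$), invoke $N_L\le CL^{d/s+d/2}$, and then deal with the fact that $\mu_N$ is only a sub-probability measure.

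Where you differ from the paper is precisely in this last step. The paper splits into the two cases $\eH^\epsilon\le 0$ (where $\int d\mu_N\le 1$ suffices) and $\eH^\epsilon>0$; in the latter case it uses Jensen's inequality to bound $\sum_k\tfrac{k(k-1)}{N(N-1)}\tr[G_{N,k}^-]$ from below by $\tfrac{N\lambda^2-\lambda}{N-1}$ with $\lambda=\tr[P_-\gamma_N^{(1)}]$, and then minimizes the resulting expression $\tfrac{N\lambda^2-\lambda}{N-1}\eH^\epsilon+\tfrac{L}{4}(1-\lambda)$ over $\lambda$. Your route is more direct: the elementary factorization $1-\tfrac{k(k-1)}{N(N-1)}=\tfrac{(N-k)(N+k-1)}{N(N-1)}\le \tfrac{4(N-k)}{N}$ combined with $\tr[P_+\gamma_N^{(1)}]=N^{-1}\sum_k(N-k)\tr[G_{N,k}^-]$ gives the linear bound $1-\int d\mu_N\le 4\tr[P_+\gamma_N^{(1)}]$, and then $|\eH^\epsilon|\le CN^{d\beta}\le L/4$ (resp.\ $|\eH^\epsilon|\le C\le L/4$ for $d=1$) lets the kinetic term absorb the deficit in a single stroke, with no case distinction and no Jensen. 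Both approaches yield the same final bound; yours is slightly more elementary, while the paper's optimization makes explicit that $\lambda=1$ is the worst case and isolates the nonnegative quantity $\int(\EH^\epsilon[u]-\eH^\epsilon)\,d\mu_N$ that is reused later in the proof of Theorem~\ref{thm:nls state}.
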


\begin{proof} 
We will employ the same notations for localized states $G_N^\pm$ obtained from the $N$-body state $\gamma_N$ and the projections $P_\pm$ as in Subsection 3.1. By Lemma~\ref{lem:deF-localized-state}, we have 
$$
\Tr_{\gH^2} \left| P_-^{\otimes 2} \gamma_{N}^{(2)} P_-^{\otimes 2} - \int_{SP_-\gH} |u^{\otimes 2}\rangle \langle u^{\otimes 2}| d\mu_N(u)\right| \le \frac{8 N_L}{N}
$$
with $d\mu_N$ defined as in~\eqref{eq:def-mu-N-localized}. On the other hand, 
\begin{equation}\label{eq:bound H-}
\norm{P_-^{\otimes 2} H_2^\epsilon P_-^{\otimes 2} } \leq 2L + (1+\epsilon)\|w_N\|_{L^\infty} \le CL 
\end{equation}
in operator norm because we have truncated the high energy spectrum of the one-body part and the two-body potential is bounded by $\|w_N\|_{L^\infty}\le CN^{d\beta}\le CL$. 
In dimension $d=1$, the estimate is the same but we use that $|w_N|\leq H_1 \otimes \1 + \1 \otimes H_1+C$, by Lemma~\ref{lem:1D}.
Therefore, 
\begin{align} 
\frac{1}{2}\tr \left[ P_-^{\otimes 2} H_2^\epsilon P_-^{\otimes 2}\gamma_{N}^{(2)} \right] &\ge \frac{1}{2}\int_{SP_- \gH} \tr_{\gH ^2} \left[ H^\epsilon_2 |u ^{\otimes 2}\rangle \langle u ^{\otimes 2}| \right]  d\mu_N -\frac{CLN_L}{N} \nn\\
& \geq \int_{SP_- \gH} \EH^\epsilon [u] d\mu_N -\frac{CL^{1+d/s+d/2}}{N}\nn\\
& \geq \int_{SP_- \gH} \big(\EH^\epsilon [u]-\eH^\epsilon\big) d\mu_N +\eH ^\epsilon \,\mu_N(SP_-\gH)-\frac{CL^{1+d/s+d/2}}{N}\label{eq:neglect integral}
\end{align}
where we have used the estimate $N_L\leq CL^{d/s+d/2}$ of Lemma~\ref{lem:nb bound states}.
Applying the variational principle $\EH^\epsilon [u] \geq \eH^\epsilon$ and computing $\int d\mu_N$ explicitly using~\eqref{eq:def-mu-N-localized}, we get
\begin{align} \label{eq:P-term}
\frac{1}{2}\tr \left[ P_-^{\otimes 2} H^\epsilon_2 P_-^{\otimes 2} \gamma_{N}^{(2)} \right] \ge 
\eH^\epsilon \sum_{k=2} ^N \frac{k(k-1)}{N(N-1)}\tr_{\gH ^k} \left[ G_{N,k} ^- \right] -\frac{CL^{1+d/s+d/2}}{N}.
\end{align}

If $\eH^\epsilon\leq0$ then we simply write
$$\sum_{k=2} ^N \frac{k(k-1)}{N(N-1)}\tr_{\gH ^k} \left[ G_{N,k} ^- \right]\leq \sum_{k=2} ^N \tr_{\gH ^k} \left[ G_{N,k} ^- \right]\leq1$$
and we are done. If $\eH^\epsilon>0$, then we need to prove that $\int d\mu_N\simeq1$ and for this we use the positive term 
$$\tr \left[(P_+\otimes\1+\1\otimes P_+) \gamma_N^{(2)}\right]=2\tr \left[P_+\gamma_N^{(1)} \right]=2\left(1-\tr\left[P_-\gamma_N^{(1)}\right]\right).$$
First, recall that by~\eqref{eq:localized-DM}
$$ \tr \left[P_- \gamma_N^{(1)} \right] = \sum_{k=1} ^N \frac{k}{N} \tr_{\gH ^k} \left[G_{N,k} ^-\right].$$
Then, using Jensen's inequality, we have
\begin{align*}
\sum_{k=2} ^N \frac{k(k-1)}{N(N-1)}\tr_{\gH ^k} \left[ G_{N,k} ^- \right]&= \frac{N}{N-1}\sum_{k=0} ^N \frac{k^2}{N^2}\tr_{\gH ^k} \left[ G_{N,k} ^- \right]-\frac{\tr P_-\gamma_N^{(1)}}{N-1}\\
&\geq\frac{N}{N-1}\left(\sum_{k=0} ^N \frac{k}{N}\tr_{\gH ^k} \left[ G_{N,k} ^- \right]\right)^2-\frac{\tr P_-\gamma_N^{(1)}}{N-1}\\
&=\frac{N}{N-1}\left(\tr P_-\gamma_N^{(1)}\right)^2-\frac{\tr P_-\gamma_N^{(1)}}{N-1}.
\end{align*}
Therefore, denoting by 
$$\lambda:=\tr P_-\gamma_N^{(1)}\leq1$$
the fraction of particles localized on the low energy states, we have the lower bound
$$\frac{1}{2}\tr \left[ P_-^{\otimes 2} H_2^\epsilon P_-^{\otimes 2}\gamma_{N}^{(2)} \right]+\frac{L}4\tr P_+\gamma_N^{(1)}\geq \frac{N\lambda^2-\lambda}{N-1}\eH^\epsilon+\frac{L}4(1-\lambda)-\frac{CL^{1+d/s+d/2}}{N}.$$
When $L>8N\eH^\epsilon/(N-1)$, the minimum of the right side is attained at $\lambda=1$. In dimension $d\geq2$, we have $\eH^\epsilon\leq CN^{d\beta}$ and $L\geq CN^{d\beta}/\epsilon$, and hence the condition $L>8N\eH^\epsilon/(N-1)$ is always fulfilled for $N$ large enough. Similarly, if $d=1$ and $\beta>0$, we use that $|\eH^\epsilon|\leq C$ and $L\geq C\epsilon^{-2}$. In all cases we get 
\begin{equation}
\frac{1}{2}\tr \left[ P_-^{\otimes 2} H_2^\epsilon P_-^{\otimes 2}\gamma_{N}^{(2)} \right]+\frac{L}4\tr \left[ P_+\gamma_N^{(1)} \right] \geq \eH^\epsilon-\frac{CL^{1+d/s+d/2}}{N}
\label{eq:final estim with control on excited particles}
\end{equation}
which is the desired inequality.
\end{proof}

\subsection{Conclusion: proof of Theorem~\ref{thm:error Hartree}}\label{sec:end proof}

The upper bound in~\eqref{eq:energy estimate} is trivial, taking a factorized trial state for the $N$-body energy. To show the lower bound, let us consider a ground state $\gamma_{N}=|\Psi_N \rangle \langle \Psi_N|$ of $H_N$. From Lemma~\ref{lem:localize-energy} and Lemma~\ref{lem:main terms}, we have the lower bound 
\begin{equation}\label{eq:low additional term}
\frac{E(N)}{N} \geq \eH^\epsilon -C\frac{L^{1+ d/s + d/2}}{N}+\frac{1}4\tr \left[P_+H_1\gamma_N^{(1)}\right]
\end{equation}
which gives the result for $L = CN^{d\beta}\epsilon^{-1}$  in dimensions $d\geq2$ and, if $\beta=0$, in dimension $d=1$. The last term in~\eqref{eq:low additional term} is positive and will be useful later. For now we can just drop it from the lower bound.

If $d=1$ and $\beta>0$, we take $L = C\epsilon^{-2}$. By Lemma~\ref{lem:1D} we have $|\eH|\leq C$ and 
$$H_2^\epsilon=(1-\epsilon)H_2+\epsilon\left(H_1 \otimes \1 + \1 \otimes H_1-(w_N)_-\right)\geq (1-\epsilon)H_2-\epsilon C,$$
which implies $\eH^\epsilon\geq \eH-C\epsilon$. Thus we find
$$\frac{E(N)}{N} \geq \eH-C\epsilon-C\frac{\epsilon^{-3-2/s}}{N}+\frac{1}4\tr \left[P_+H_1\gamma_N^{(1)}\right].$$
The estimate~\eqref{eq:energy estimate 1D} then follows after optimizing with respect to $\epsilon$.\hfill\qed

\section{The NLS limit}\label{sec:Hartree to NLS}

Now we explain how to go from the Hartree model to the NLS model and prove Proposition~\ref{thm:CV Hartree} and Theorems~\ref{thm:deriv nls} and~\ref{thm:nls state}.

\subsection{From Hartree to NLS: proof of Proposition~\ref{thm:CV Hartree}} The key observations are summarized in the following lemma.

\begin{lemma}[\textbf{Stability of the effective one-body functionals}]\label{lem:Hartree NLS}\mbox{}\\
Let $d\leq 3$. We assume that~\eqref{eq:asum w},~\eqref{eq:asum V} and~\eqref{eq:asum A} hold true, and that $|x| w (x) \in L ^1 (\R ^d)$. When $d=2,3$ we also assume that $w$ is stable (in the sense of~\eqref{eq:Hartree stable strict} when $d=2$ and~\eqref{eq:repulsive} when $d=3$).

Then the set of minimizers for $\ENLS$ is non-empty and compact in the quadratic form domain of $H_1$. Also, for any normalized function $u\in L^2(\R^d)$ we have
\bq \label{eq:hartree NLS 0}
\int_{\R^d}\left|\nabla|u|(x)\right|^2\,dx\le C (\EH [u]+C)
\eq
and 
\bq \label{eq:hartree NLS 1}
\left| \EH [u] - \ENLS[u] \right| \le C N^{-\beta} \left(1+\int_{\R^d}\left|\nabla|u|(x)\right|^2\,dx\right)^2. 
\eq
\end{lemma}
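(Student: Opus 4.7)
The plan is to reduce all three claims to a small number of elementary computations after setting up a convenient preliminary framework. Since the difference $\EH[u]-\ENLS[u]$ depends only on $|u|$, and since the diamagnetic inequality $|\nabla|u||\leq |(\nabla+iA)u|$ dominates the magnetic kinetic energy, it is enough to prove~\eqref{eq:hartree NLS 0} and~\eqref{eq:hartree NLS 1} for $u$ real and nonnegative. The other important preliminary is to check that the stability assumption on $w$ transfers to $w_N$. The rescaling $\phi(y)=\lambda^{d/2}u(\lambda y)$ with $\lambda=N^{-\beta}$ gives the identity $\iint |\phi|^2|\phi|^2\, w = \lambda^d\iint |u|^2|u|^2 w_N$, together with $\|\nabla\phi\|^2=\lambda^2\|\nabla u\|^2$ and $\|\phi\|_{L^2}=\|u\|_{L^2}$. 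In $d=3$ this shows that classical stability~\eqref{eq:repulsive} for $w$ yields $\iint |u|^2 w_N|u|^2\geq 0$, while in $d=2$ the powers of $\lambda$ precisely cancel and strict Hartree stability~\eqref{eq:Hartree stable strict} transfers to $w_N$ with the same constant $\delta>0$.

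For the coercivity~\eqref{eq:hartree NLS 0} I would treat the three dimensions separately. In $d=3$ the interaction is nonnegative, so $\EH[u]\geq \int|\nabla|u||^2+\int V|u|^2\geq\int|\nabla|u||^2-C$. In $d=2$ the transferred stability yields $\iint|u|^2 w_N|u|^2\geq-(2-\delta)\|\nabla|u|\|_{L^2}^2$ for some $\delta>0$, and $\EH[u]\geq \tfrac{\delta}{2}\int|\nabla|u||^2-C$. In $d=1$, the Sobolev embedding $\|u\|_{L^\infty}^2\leq C\|\nabla|u|\|_{L^2}$ (using $\|u\|_{L^2}=1$) combined with $\|w_N\|_{L^1}=\|w\|_{L^1}$ gives $|\iint|u|^2 w_N|u|^2|\leq C\|\nabla|u|\|_{L^2}$, which is absorbed in the kinetic term by Young's inequality. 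Together with $V\geq -C$ this yields~\eqref{eq:hartree NLS 0} in every case.

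The core of the lemma is~\eqref{eq:hartree NLS 1}. Writing $\int w_N=a$ and working with $u\geq 0$ real,
\begin{equation*}
\EH[u]-\ENLS[u]=\frac{1}{2}\iint u(x)^2\, w_N(x-y)\bigl(u(y)^2-u(x)^2\bigr)\,dx\,dy,
\end{equation*}
which after the change of variables $z=N^\beta(x-y)$ becomes an integral of $w(z)$ against $u(x)^2(u(x-z/N^\beta)^2-u(x)^2)$. I would decompose $|u(y)^2-u(x)^2|\leq |u(y)-u(x)|\,(u(y)+u(x))$ and estimate the two resulting pieces via Cauchy--Schwarz in $x$. The term with $u(x)^3$ gives $\|u\|_{L^6}^3\|\tau_h u-u\|_{L^2}$ directly, while the cross term $u(x)^2 u(x-h)$ requires one Hölder step with exponents $3/2$ and $3$ (bounding $\int u^4(x)u^2(x-h)\,dx\leq \|u\|_{L^6}^6$) before applying the translation bound $\|\tau_h u-u\|_{L^2}\leq |h|\|\nabla u\|_{L^2}$. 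All in all
\begin{equation*}
|\EH[u]-\ENLS[u]|\leq C\, N^{-\beta}\!\left(\int |w(z)||z|\,dz\right)\|u\|_{L^6}^3\|\nabla u\|_{L^2}.
\end{equation*}
Since $d\leq 3$, Gagliardo--Nirenberg with $\|u\|_{L^2}=1$ gives $\|u\|_{L^6}^3\leq C\|\nabla u\|_{L^2}^{d}$, hence $\|u\|_{L^6}^3\|\nabla u\|_{L^2}\leq C\|\nabla u\|_{L^2}^{d+1}\leq C(1+\|\nabla u\|_{L^2}^2)^2$, which yields~\eqref{eq:hartree NLS 1} after identifying $\nabla u$ with $\nabla|u|$. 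The main obstacle here is the absence of an $L^\infty$ control on $u$ in dimensions $2$ and $3$: one cannot simply pull $u(x)^2$ out of the integral, and the trick is to split $u(y)^2-u(x)^2$ linearly so that everything can be absorbed into $L^6$, which \emph{is} controlled by $\|\nabla u\|_{L^2}$ through Sobolev for $d\leq 3$.

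Finally, for existence and compactness of NLS minimizers I would run the same coercivity argument on $\ENLS$ instead of $\EH$ (the transferred stability still applies in the limit). Any minimizing sequence is then bounded in the quadratic form domain of $H_1$; the trapping $V(x)\geq c|x|^s-C$ together with Rellich's theorem delivers a subsequence converging strongly in $L^q$ for any $q<2d/(d-2)$, in particular in $L^4$. Weak lower semicontinuity of the kinetic part and continuity of the $L^4$ nonlinearity produce a minimizer, which handles the $d=2$ case since strict Hartree stability at $u=Q$ forces $a=\int w>-a^*$. Applied to a sequence of genuine minimizers the same argument gives convergence of the quadratic form norms, whence $\MNLS$ is compact.
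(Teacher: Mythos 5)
Your proof is correct, and its overall skeleton matches the paper's: case-by-case coercivity for~\eqref{eq:hartree NLS 0} (positivity of the interaction in $d=3$, absorption into the kinetic energy via the stability quotient in $d=2$, the $L^\infty$--$L^1$ bound in $d=1$), then the change of variables $z=N^\beta(x-y)$ to extract the factor $N^{-\beta}\int|z||w(z)|\,dz$, and finally Sobolev in $L^6$ for $d\leq 3$. The one genuinely different step is the middle of~\eqref{eq:hartree NLS 1}: the paper writes $|u(x+zN^{-\beta})|^2-|u(x)|^2=\int_0^1\nabla|u|^2(x+tzN^{-\beta})\cdot zN^{-\beta}\,dt$ and then bounds $\norm{|u|^2\ast\nabla|u|^2}_{L^\infty}\leq \norm{u}_{L^6}^3\norm{\nabla|u|}_{L^2}$ by Young's convolution inequality, whereas you never differentiate the product and instead factor $|u(y)^2-u(x)^2|\leq|u(y)-u(x)|(u(y)+u(x))$ and invoke the $H^1$ translation bound $\norm{\tau_hu-u}_{L^2}\leq|h|\norm{\nabla u}_{L^2}$ together with H\"older. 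Both routes land on exactly the same quantity $N^{-\beta}\left(\int|z||w(z)|\right)\norm{u}_{L^6}^3\norm{\nabla|u|}_{L^2}$, so the choice is a matter of taste; yours avoids manipulating $\nabla|u|^2$ at the price of one extra H\"older split. You also make explicit two points the paper leaves implicit: the scaling computation showing that stability of $w$ transfers to $w_N$ (with the same constant in $d=2$, where the Hartree-stability quotient is exactly scale invariant), and the direct-method argument for existence and compactness of $\MNLS$, which the paper's proof of this lemma does not spell out at all. Both additions are sound; in particular your observation that strict Hartree stability forces $\int w>-a^*$, giving coercivity of $\ENLS$ in $d=2$, is the correct way to close the existence argument.
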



\begin{proof}
We start by proving~\eqref{eq:hartree NLS 0}. When $d=3$, since $w$ satisfies~\eqref{eq:repulsive} the nonlinear term is positive, and therefore the inequality~\eqref{eq:hartree NLS 0} follows immediately from the diamagnetic inequality $|(\nabla + i A) u| \geq  | \nabla |u|\,|$ and the assumption that $V\geq -C$.

If $d=2$, we use the stability assumption on $w$ to obtain, for $\eta$ small enough,
\begin{align*}
C+\EH(u)&\geq \int_{\R^2}|\nabla|u||^2+\frac12\iint_{\R^2\times\R^2}|u(x)|^2|u(y)|^2w(x-y)\,dx\,dy\\
&\geq \frac{1}{1- \eta}\left(\int_{\R^2}|\nabla|u||^2+\frac{1-\eta}2\iint_{\R^2\times\R^2} |u(x)|^2|u(y)|^2w(x-y)\,dx\,dy\right)\\
&\qquad\qquad+\frac{\eta}{1- \eta}\int_{\R^2}|\nabla|u||^2\\
&\geq \frac{\eta}{1- \eta}\int_{\R^2}|\nabla|u||^2.
\end{align*}
Finally, in dimension $d=1$, we use Lemma~\ref{lem:1D} and obtain
$$
\EH[u]=\frac12\pscal{u^{\otimes 2},H_2u^{\otimes 2}}
\geq -C-C\left(\int_\R w^-\right)^2+\frac12\pscal{u,H_1u}\geq \frac12\int_{\R}|u'|^2-C.
$$

Next, we prove (\ref{eq:hartree NLS 1}). We change variables to write 
\begin{align*}
\EH [u]- \ENLS [u] &= \frac{1}{2}\iint_{\R ^d\times \R ^d} |u(x)| ^2 w_N (x-y) |u(y)| ^2 dxdy - \frac{1}{2} \left( \int w \right)\int_{\R ^d } |u|^4 \\
& = \frac{1}{2} \iint_{\R ^d\times \R ^d} |u(x)|^2  w(z) \Big( |u(x+zN^{-\beta})|^2-|u(x)|^2 \Big) dxdz\\
&= \frac{1}{2} \iint_{\R ^d\times \R ^d}  |u(x)|^2  w(z) \Big( \int_{0}^1  \nabla |u|^2 (x+tzN^{-\beta}) \cdot z N^{-\beta} dt \Big) dx dz
\end{align*}
and hence
\begin{equation*}
\big|\EH [u]- \ENLS [u]\big|\leq N^{-\beta}\left(\int_{\R^d}|z||w(z)|\,dz\right)\norm{|u|^2\ast\nabla|u|^2}_{L^\ii(\R^d)}.
\end{equation*}
In dimensions $d\leq3$, we can write by the Young and Sobolev inequalities
\begin{align*}
\norm{|u|^2\ast|\nabla|u|^2|}_{L^\ii(\R^d)}&\leq \norm{u}^3_{L^{6}}\norm{\nabla |u|}_{L^{2}(\R^d)}\leq C\norm{|u|}_{H^1(\R^d)}^4,
\end{align*}
which concludes the proof of~\eqref{eq:hartree NLS 1}.
\end{proof}

With Lemma~\ref{lem:Hartree NLS} at hand, it is now easy to prove Proposition~\ref{thm:CV Hartree}.

\begin{proof}[Proof of Proposition~\ref{thm:CV Hartree}]
We first prove 
\begin{equation}\label{eq:hartree NLS 2}
|\eH - \eNLS | \le C N ^{-\beta}.
\end{equation}
Let $v$ be a minimizer for $\eNLS$. Then $|v|\in H^1(\R^d)$ by~\eqref{eq:hartree NLS 0}. Consequently, 
$$\eH\leq \EH(v)\leq \ENLS(v)+CN^{-\beta}\norm{|v|}_{H^1(\R^d)}^4=\eNLS+CN^{-\beta}.$$
Similarly, pick for every $N$ a minimizer $u_N$ for $\eH$. By~\eqref{eq:hartree NLS 0}, $|u_N|$ is uniformly bounded in $H^1(\R^d)$. Hence
$$\eNLS\leq \ENLS(u_N)\leq \EH(u_N)+CN^{-\beta}\norm{|u_N|}_{H^1(\R^d)}^4\leq \eH+CN^{-\beta},$$
which concludes the proof of~\eqref{eq:hartree NLS 2}.

Next we prove that $\eH\to-\ii$ as $N\to\ii$ when $w$ is not stable in dimensions $d\geq2$. Let $u\in C^\ii_c(\R^d)$ with support in the unit ball and $\norm{u}_{L^2}=1$. Using as trial state $v_N(x)=N^{d\beta/2} u(N^\beta x)$, our assumption~\eqref{eq:asum V} on $V$ and the pointwise estimate 
$$|\nabla v_N+iAv_N|^2\leq (1+\epsilon)|\nabla v_N|^2+(1+\epsilon^{-1})|A|^2|v_N|^2,$$ 
we find that
\begin{align*}
\eH
&\leq C+(1+\epsilon)N^{2\beta}\int_{\R^d} |\nabla u|^2+(1+\epsilon^{-1})N^{d\beta}\norm{u}_{L^\ii}\int_{|x|N^\beta\leq 1} |A|^2 + CN^{-S\beta}\int_{\R^d}|x|^S|u|^2\\
&\qquad\qquad\qquad+\frac{N^{d\beta}}{2} \iint _{\R^d \times \R^d} |u(x)|^2 w(x-y) |u(y)|^2 dx dy \\
&=(1+\epsilon)N^{2\beta}\int_{\R^d} |\nabla u|^2+\frac{N^{d\beta}}{2} \iint _{\R^d \times \R^d} |u(x)|^2 w(x-y) |u(y)|^2 dx dy+o(N^{d\beta}).
\end{align*}
If $d\geq3$ and $w$ is not classically stable, we choose $u$ to have
$$
\iint _{\R^d \times \R^d} |u(x)|^2 w(x-y) |u(y)|^2 dx dy  <0,
$$
and we conclude that $\lim_{N\to\ii}\eH=\lim_{N\to\ii}E(N)/N=-\ii$ since $E(N)/N\leq\eH$. If $d=2$, we take $\epsilon^2=\int_{|x|\leq N^{-\beta}}|A|^2$ and obtain 
$$\limsup_{N\to\ii}\frac{\eH}{N^{2\beta}}\leq \inf_{\substack{u\in H^1(\R^2)\\ \norm{u}_{L^2(\R^2)}=1}}\left(\int_{\R^2} |\nabla u|^2+\frac{1}{2} \iint _{\R^2 \times \R^2} |u(x)|^2 w(x-y) |u(y)|^2 dx dy\right).$$
The right side is strictly negative by assumption.
\end{proof}

\subsection{Convergence of the many-body energy: proof of Theorem~\ref{thm:deriv nls}}
In the cases $d=1$, $d=2$ with $w$ stable, the result immediately follows from Theorem~\ref{thm:error Hartree}, Remark~\ref{rem:MF estimates} and Proposition~\ref{thm:CV Hartree}. If $w-\eta|w|$ is stable for some $0<\eta<1$, so is $w-\epsilon|w|$ for all $0 \leq \epsilon \leq \eta$. Then we can apply Proposition~\ref{thm:CV Hartree} with $w$ replaced by $w-\epsilon|w|$ for some $\epsilon<1$ to be tuned later on and get
\begin{equation}
|\eNLS^\epsilon-\eH^\epsilon|\leq CN^{-\beta}
\label{eq:compare NLS epsilon}
\end{equation}
where $\eNLS^\epsilon$ is the NLS minimization problem, with $a=\int w$ replaced by $\int w-\epsilon\int|w|$. Arguing as in the proof of Lemma~\ref{lem:Hartree NLS}, it is not difficult to see that $|\eNLS^\epsilon-\eNLS|\leq C\epsilon$. Therefore Theorem~\ref{thm:error Hartree} provides the bound
$$\eNLS+CN^{-\beta}\geq \frac{E(N)}{N}\geq \eNLS-C\epsilon-CN^{-\beta}-C\frac{\epsilon^{-1-d/2-d/s}}{N ^{1-d\beta(1+d/2+d/s)}}.$$
Optimizing over $\epsilon$ gives~\eqref{eq:energy estimate NLS}.
\hfill\qed

\subsection{Convergence of states: proof of Theorem~\ref{thm:nls state}}\label{sec:conv state} We split the proof in four steps for clarity.

\noindent\textbf{Step 1, strong compactness of density matrices.} We first note that $\gamma_N ^{(n)}$ is by definition bounded in the trace-class, so that we can extract a subsequence along which
\begin{equation}\label{eq:weak CV}
\gamma_N ^{(n)} \wto_* \gamma ^{(n)} 
\end{equation}
as $N\to \infty$. Modulo a diagonal extraction argument, one can assume that the convergence is along the same subsequence for any $n$. We now argue that the convergence is actually strong. We start by proving that 
\begin{equation}\label{eq:unif bound kinetic}
\tr \Big[H_1\gamma_N^{(1)}\Big]=\Tr \Big[ \left(- \big( \nabla +i A \right) ^2 + V\big) \gamma_N^{(1)} \Big] \leq C, 
\end{equation}
independently of $N$. To this end, pick some $\alpha>0$ and define
\begin{equation*}\label{eq:start hamil eta}
H_{N,\alpha} = \sum_{j=1} ^N  \left( - \Big( \nabla +i A (x_j)\right) ^2 + V(x_j) \Big) + \frac{1+\alpha}{N-1} \sum_{1\leq i<j \leq N}N ^{d\beta} w( N ^{\beta} (x_i-x_j)).
\end{equation*}
Noticing for instance that
$$(1+\eta/4)(w-\eta/4|w|)\geq w-(\eta/2+\eta^2/16)|w|\geq w-\eta|w|,$$
we can apply the results of Theorem~\ref{thm:deriv nls} with $H_N$ replaced by $H_{N,\alpha}$. We find in particular that $H_{N,\alpha}\geq -CN$ and deduce that
$$\eNLS+o(1)\geq \frac{\pscal{\Psi_N,H_N\Psi_N}}N\geq -C(1+\alpha)^{-1}+\frac{\alpha}{1+\alpha}\tr \big[H_1\gamma_N^{(1)}\big].$$
Hence~\eqref{eq:unif bound kinetic} holds true.
Since $H_1=- \left( \nabla +i A \right) ^2 + V$ has a compact resolvent,~\eqref{eq:weak CV} and~\eqref{eq:unif bound kinetic} imply that, up to a subsequence, $\gamma_{N}^{(1)}$ converges strongly in the trace class. By~\cite[Corollary~2.4]{LewNamRou-14}, $\gamma_N^{(n)}$ converges strongly as well for all $n\geq1$.

\medskip

\noindent\textbf{Step 2, introducing the limit measure.} Now we extract some useful information from the proof of Theorem~\ref{thm:error Hartree}. We shall use the same notation as in Section ~\ref{sec:Hartree} (with the same choices for $L$ and $\epsilon$ that were made later in the proof). For simplicity we denote by
$$r_N:=N^{-\beta}+N^{-\tfrac{1}{4+2/s}}\1(d=1)+N^{-\tfrac{1-d\beta(1+d/2+d/s)}{2+d/s+d/2}}\1(d=2,3)$$
the best error bound that we have derived on $|E(N)/N-\eNLS|$.

Let $d\mu_N$ be defined as in Lemma~\ref{lem:deF-localized-state}, which is such that 
$$\mu_N(SP_-\gH)=\tr \left[P_-^{\otimes 2}\gamma_N^{(2)}P_-^{\otimes 2}\right]$$
Our arguments of Section~\ref{sec:Hartree} actually imply that $\mu_N(SP_-\gH)\to1$ but we will recover this fact here. We have 
$$
\Tr \left| P_-^{\otimes 2} \gamma_N^{(2)} P_-^{\otimes 2} - \int_{SP_-\gH} |u^{\otimes 2}\rangle \langle u^{\otimes 2}| d\mu_N (u) \right| \leq \frac{8N_L}{N}\leq C\frac{L^{1+d/s+d/2}}{N}\to0.
$$
On the other hand, the estimate~\eqref{eq:final estim with control on excited particles} and the error bounds provide a control on the number of excited particles: 
\begin{equation}\label{eq:bound excited particles}
1-\mu_N(SP_-\gH)=\Tr\left[(1-P_-^{\otimes 2}) \gamma_N^{(2)}\right] \leq 2\Tr \left[P_+ \gamma_N^{(1)} \right]\leq \frac{r_N}{L}. 
\end{equation}
Therefore, by the triangle and Cauchy-Schwarz inequalities, we find
\bq \label{eq:mu-N-localized-cv-gamma2}
\Tr \left|\gamma_N^{(2)} - \int_{SP_-\gH} |u^{\otimes 2}\rangle \langle u^{\otimes 2}| d\mu_N (u) \right| \leq C\frac{L^{1+d/s+d/2}}{N}+C\sqrt{\frac{r_N}{L}}.
\eq
Next, we denote $P_K$ the spectral projector of $H_1$ onto energies below a cut-off $K$. Since $\gamma_N ^{(2)} \to \gamma ^{(2)}$ and $P_K \to \1$ we deduce from the above 
$$ 
\lim_{K \to \infty} \lim_{N\to \infty} \mu_N (S P_K\gH) = 1.
$$
This tightness condition allows us to use Prokhorov's theorem and~\cite[Lemma~1]{Skorokhod-74} to ensure that, up to extraction of a subsequence, $\mu_N$ converges weakly to a measure $\mu$ in the ball $B\gH$. After passing to the weak limit, we find that
$$\gamma^{(2)}=\int_{B\gH} |u^{\otimes 2}\rangle \langle u^{\otimes 2}| d\mu(u).$$
Since $\mu(B\gH)\leq1$ and $\tr\gamma^{(2)}=1$ by the strong convergence of $\gamma^{(2)}_N$, we conclude that $\mu$ is supported on the sphere $S\gH$.

\medskip

\noindent\textbf{Step 3, the limit measure charges only NLS minimizers.} Going back to our proof in the previous sections and using that 
$$\mu_N(SP_-\gH)=1+O\left(\frac{r_N}{L}\right),$$
we deduce the bound 
$$\int_{SP_-\gH } \big(\EH^\epsilon [u] -\eH^\epsilon\big)d\mu_N (u)\leq Cr_N$$
on the term that we had neglected in~\eqref{eq:neglect integral}.
By Lemma~\ref{lem:Hartree NLS}, this implies that, for $B$ a large enough (but fixed) constant,
$$\frac{B^2}{C}\int_{\|\nabla|u|\|_{H^1}\geq B} d\mu_N (u)\leq \int_{\|\nabla|u|\|_{H^1}\geq B} \big(\EH^\epsilon [u] -\eH^\epsilon\big)d\mu_N (u)\leq Cr_N.$$
and
$$\int_{\|\nabla|u|\|_{L^2}\leq B} \big(\ENLS [u] -\eNLS\big)d\mu_N\leq C(1+B^4)(\epsilon+N^{-\beta})+\int_{\|\nabla|u|\|_{L^2}\leq B} \big(\EH [u] -\eH\big)d\mu_N (u)\leq Cr_N.$$
Passing to the limit $N\to\ii$, it is now clear that $\mu$ has its support on $\MNLS$.

At this stage, from~\eqref{eq:mu-N-localized-cv-gamma2} and the convergence of $\mu_N$ we have, along a subsequence  
$$ \gamma_N^{(2)} \to \int_{\MNLS} |u^{\otimes 2}\rangle \langle u^{\otimes 2}| d\mu(u),$$
strongly in the trace-class, where $\mu$ is a probability measure supported on $\MNLS$. Taking a partial trace we also have
$$ \gamma_N^{(1)} \to \int_{\MNLS} |u\rangle \langle u| d\mu(u).$$

\medskip

\noindent\textbf{Step 4, higher order density matrices.} There remains to prove that, for any $n > 2$,
$$ \gamma_N^{(n)} \to \int_{\MNLS} |u^{\otimes n}\rangle \langle u^{\otimes n}| d\mu(u),$$
strongly in the trace-class when $N\to \infty$. In view of the definition of $\mu$ this follows from the estimate
\begin{equation}\label{eq:claim higher DM}
\Tr \left|\gamma_N^{(n)} - \int_{SP_-\gH} |u^{\otimes n}\rangle \langle u^{\otimes n}| d\mu_N (u) \right| \to 0.
\end{equation}
To see that~\eqref{eq:claim higher DM} holds we first define a measure approximating $\gamma_N ^{(n)}$, as indicated in Remark~\ref{rem:construct deF}. Arguing as in the proof of Lemma~\ref{lem:deF-localized-state} we have  
\begin{equation}\label{eq:deF higher DM}
\Tr_{\gH^n} \left| P_-^{\otimes n} \gamma_{N}^{(n)} P_-^{\otimes n} - \int_{SP_-\gH} |u^{\otimes n}\rangle \langle u^{\otimes n}| d\mu_N ^n (u)\right| \le C\frac{ n N_L}{N}
\end{equation}
where
\bq \label{eq:def-mu-N-localized bis}
d\mu_N ^n (u) = \sum_{k=n}^N  { N \choose n} ^{-1} {k\choose n}  d\mu_{N,k}(u)
\eq
and $d\mu_{N,k}$ is the same measure as in~\eqref{eq:def-mu-N-localized}. An estimate similar to~\eqref{eq:bound excited particles} next shows that 
$$\Tr_{\gH^n} \left| \gamma_{N}^{(n)} - \int_{SP_-\gH} |u^{\otimes n}\rangle \langle u^{\otimes n}| d\mu_N ^n (u)\right| \to 0.$$
Using the easy bound (see~\cite[Section 2]{LewNamRou-14})
$$ { N \choose n} ^{-1} {k\choose n} = \left(\frac{k}{N}\right) ^n + O (N ^{-1})$$
together with the triangle inequality and Schur's formula (see Remark~\ref{rem:Schur}) we next deduce from~\eqref{eq:deF higher DM} that 
\begin{align}\label{eq:higher DM final}
\Tr \left|\gamma_N^{(n)} - \int_{SP_-\gH} |u^{\otimes n}\rangle \langle u^{\otimes n}| d\mu_N (u) \right| &\leq  \sum_{k=0} ^N \left( \left(\frac{k}{N}\right) ^2 - \left(\frac{k}{N}\right) ^n \right) \tr_{\gH ^k} \left[ G_{N,k} ^- \right]  \nonumber
\\&+ \sum_{k=0} ^{n-1} \left(\frac{k}{N}\right) ^n \tr_{\gH ^k} \left[ G_{N,k} ^- \right] \nonumber
\\&+ \sum_{k=0} ^{2} \left(\frac{k}{N}\right) ^2 \tr_{\gH ^k} \left[ G_{N,k} ^- \right] + o(1).
\end{align}
Finally, combining our previous estimates gives  
$$ \sum_{k=2} ^N  \left(\frac{k}{N}\right) ^2 \tr_{\gH ^k} \left[ G_{N,k} ^- \right] \to 1$$
and using~\eqref{eq:nomalization-localized-state} we in fact have
$$\sum_{k=0} ^N  \left(\frac{k}{N}\right) ^2 \tr_{\gH ^k} \left[ G_{N,k} ^- \right] \to 1.$$
Then by Jensen's inequality  
$$1\geq \sum_{k=0} ^N  \left(\frac{k}{N}\right) ^n \tr_{\gH ^k} \left[ G_{N,k} ^- \right]\geq \left( \sum_{k=0} ^N  \left(\frac{k}{N}\right) ^2 \tr_{\gH ^k} \left[ G_{N,k} ^- \right]\right) ^{n/2} \to 1.$$
Inserting this and~\eqref{eq:nomalization-localized-state} in~\eqref{eq:higher DM final} concludes the proof of~\eqref{eq:claim higher DM} and thus that of the theorem. 
\hfill\qed

We finally note that our method can give quantitative estimates on reduced density matrices in some special cases:

\begin{remark}[The case of a stable NLS functional]\label{rem:error states}
If a stability estimate of the form
\begin{equation}\label{eq:stability NLS}
\ENLS [u] \geq \eNLS + c \inf_{v\in \MNLS} \norm{u-v} ^2 
\end{equation}
holds at the level of the NLS functional, for some norm $\norm{\:.\:}$ (say the $L^2$ norm), it is easy to see that the previous method leads to quantitative estimates on density matrices. In particular, if the NLS minimizer is unique (up to a constant phase), non-degenerate, and~\eqref{eq:stability NLS} holds, one may obtain quantitative bounds on the depletion of the condensate. Indeed, it immediately follows from the above considerations and~\eqref{eq:stability NLS} that  
$$ \int_{u\in SP_- \gH} \big\Vert \left|u\right\rangle\left\langle u\right| - \left|\uNLS \right\rangle\left\langle \uNLS\right| \big\Vert_{\gS ^2} ^2 \:d\mu_N (u) \leq r_N$$
where $\gS ^2$ denotes the Hilbert-Schmidt class. Then, using Jensen's inequality together with the fact that $|\uNLS \rangle\langle \uNLS|$ is a rank-one projection we have 
$$ \left\Vert \int_{u\in SP_- \gH} |u\rangle \langle u| d\mu_N (u) - |\uNLS \rangle\langle \uNLS| \right\Vert_{\gS ^1} \leq C \sqrt{r_N}$$
and thus, in view of our previous bounds, 
$$ \left\Vert \gamma_N ^{(n)} - \left|\uNLS  ^{\otimes n}\right\rangle\left\langle \uNLS ^{\otimes n} \right| \right\Vert_{\gS ^1 (L ^2 (\R^3))} \leq C_n \sqrt{r_N},$$
where $C_n$ depends only on $n$.

Since we do not want to rely on assumptions such as~\eqref{eq:stability NLS} we do not pursue in this direction. One may consult e.g.~\cite{CarFraLie-14} or~\cite[Section 6]{Frank-14} for discussions of estimates of the form~\eqref{eq:stability NLS}. $\triangle$
\end{remark}


\end{document}